\documentclass[12pt]{article}
\usepackage{latexsym}
\usepackage{amsmath}
\usepackage{amsfonts}
\usepackage{amssymb}
\usepackage{amsthm}
\usepackage{setspace}
\usepackage{graphicx}
\usepackage{tikz}
\usetikzlibrary{arrows}
\usepackage{hyperref}
 \hypersetup{
    colorlinks   = true,
     citecolor    = blue,
     linkcolor=blue,
     urlcolor=blue
}
\usepackage{natbib}
\onehalfspacing
\setlength{\textwidth}{6.3in}
\setlength{\textheight}{8.7in}
\setlength{\topmargin}{0pt}
\setlength{\headsep}{0pt}
\setlength{\headheight}{0pt}
\setlength{\oddsidemargin}{0pt}
\setlength{\evensidemargin}{0pt}
\hyphenation{strategy-proof strategy-proof-ness}

\newcommand\beq{\begin{equation}}
\newcommand\eeq{\end{equation}}
\newcommand\bit{\begin{itemize}}
\newcommand\eit{\end{itemize}}
\newcommand\bea{\begin{eqnarray}}
\newcommand\eea{\end{eqnarray}}
\newcommand\beas{\begin{eqnarray*}}
\newcommand\eeas{\end{eqnarray*}}
\newcommand\beqa{\begin{eqnarray}}
\newcommand\eeqa{\end{eqnarray}}

\newcommand\pfof[1]{{\bf Proof of #1:  }}
\newcommand\eop{\hfill $\square$}

\newtheorem{theorem}{Theorem}
\newtheorem{lemma}{Lemma}
\newtheorem{proposition}{Proposition}

\theoremstyle{remark}
\newtheorem{example}{Example}
\renewcommand\({\left(}
\newcommand\vmax{\overline{v}}
\newcommand\vmin{\underline{v}}
\begin{document}

\title{Distributionally Robust Pricing In Independent Private Value Auctions\footnote{This paper is based on Chapter II of my PhD thesis at Stanford University and comprises a much revised version of an earlier working paper \cite{suzdal2018}. I would like to thank (in random order) Michael Ostrovsky, Andy Skrzypacz, Dmitry Arkhangelsky, Jeremy Bulow, Robert Wilson, Gabriel Carroll, Ilya Segal, Evgeny Drynkin, and audience members at 2017 Conference on Economic Design, York, UK, for helpful comments. } } 
\author{Alex Suzdaltsev\footnote{Higher School of Economics, Saint Petersburg, Russia, \texttt{asuzdaltsev@gmail.com}} 
}
\date{\today}

\maketitle

\begin{abstract}
A seller chooses a reserve price in a second-price auction to maximize worst-case expected revenue when she knows only the mean of value distribution and an upper bound on either values themselves or variance. Values are private and iid. Using an indirect technique, we prove that it is always optimal to set the reserve price to the seller's own valuation. However, the maxmin reserve price may not be unique. If the number of bidders is sufficiently high, all prices below the seller's valuation, including zero, are also optimal. 
A second-price auction with the reserve equal to seller's value (or zero) is an asymptotically optimal mechanism (among all ex post individually rational mechanisms) as the number of bidders grows without bound.  

\end{abstract}

Keywords: Robust mechanism design; Worst-case objective; Auctions; Moments problems

JEL codes: D44, D82

\section{Introduction}
Classic auction theory derives revenue-maximizing reserve prices under the assumption of known distribution of bidders' values. One may give two interpretations to this assumption: (i) the probability distribution is objectively known to the seller; (ii) the distribution represents her beliefs.  Under any of these interpretations, the usual assumption can be problematic: (i) may not hold in practice, especially in the case of new goods; and (ii) contradicts empirical evidence showing that humans do not act as Bayesian decision-makers \citep{ellsberg1961risk}. 

In this paper, we propose an analysis of a simple textbook auction environment with one tweak: assume that the seller, instead of knowing the distribution of bidders' values, knows less, and evaluates the residual uncertainty over the distributions using worst-case criterion. What would be the optimal reserve price?

We assume that bidders' private valuations are known to be independent draws from some unknown distribution $F$. We then consider two specifications of the seller's information: (1) the seller knows the mean of $F$ and an upper bound on values; (2) she knows the mean of $F$ and an upper bound on its variance. One may justify this approach in various ways:
\begin{itemize}
\item It may be easier for the seller to make an educated guess about two numbers than about a whole distribution.
\item It may be easier for the seller to estimate statistically a small number of parameters than a whole distribution. In particular, nonparametric estimators of density functions converge more slowly than parametric estimators of the distribution's moments.
\item As shown by \cite{wolitzky2016mechanism}, a model of a seller who knows only the mean of value distribution and bounds on its support can arise from the seller's uncertainty about bidders' information structures\footnote{In particular, bidder's posterior mean can follow any distribution $F$ with mean $m$ and support in $[0,\vmax]$ if the prior value distribution is a binary distribution on $\{0,\vmax\}$ with mean $m$. To apply this in our setting where values are known to be iid, we must assume that bidders information structures are identical.}. 
\end{itemize}

The question of which reserve price is maxmin may be interesting not only from a normative, but from a positive perspective as well. Empirical literature on reserve prices in auctions\footnote{See, e.g., \cite{mcafee1992updating}; \cite{paarsch1997deriving}; \cite{mcafee2002set}; \cite{haile2003inference}.} yields what \cite{ostrovsky2016reserve} call a ``reserve price puzzle'': the reserve prices observed in auctions are typically substantially lower than reserve prices optimal under the estimated distributions of values. The very formulation of the puzzle suggests that its possible explanation involves postulating that the sellers do not possess estimates produced by econometric studies. This lack of distributional information, coupled with a worst-case perspective employed by a seller could intuitively explain the fact the observed reserve prices are low. Indeed, as a function of the reserve price, the expected revenue under a fixed distribution is typically relatively flat to the left of the optimum and declines sharply to the right of the optimum, so the losses when overshooting the unknown optimal reserve are substantial while the losses when undershooting are minor (see Figure~ \ref{fig:Bayesian}). This suggests that a cautious seller may want to employ a low reserve. 

\begin{figure}[h!]
\centering
\begin{tikzpicture}[>=triangle 45, xscale=6,yscale=4]
\draw[->] (0,0) -- (1.1,0) node[below] {$r$};
\draw[->] (0,0) -- (0,0.8) node[left] {${R}$};

\draw[thick,domain=0:1] 
plot (\x,{pow(\x,3)-3/2*pow(\x,4)+1/2});
\draw[dashed] (0.5,0)--(0.5,17/32);
\node[below] at (0.5,0){$r^*$};

\end{tikzpicture}
\caption{A typical plot of expected revenue as a function of reserve price. The revenue is close to optimal to the left of the optimum and is far from optimal to the right of it. (The plot shown is for $v_i\sim U[0,1]$ and $n=3$.)}\label{fig:Bayesian}
\end{figure}
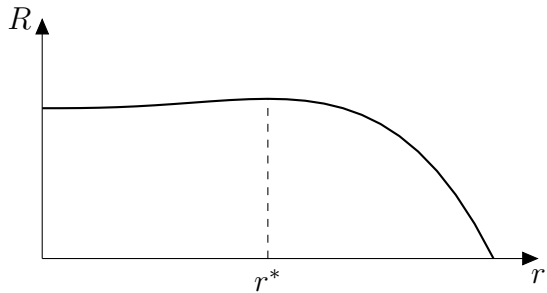

The main results of this paper state exactly this: a seller maximizing the worst-case expected revenue cannot do better than to set the reserve price to her own valuation, in contrast to the classical Bayesian setting where the optimal reserve is higher than the seller's valuation. This is true in both settings we consider. The ``low reserves under ambiguity'' phenomenon is an emerging theme in the literature (see section \ref{litreview} for details); our setting is, to the best of our knowledge, the first one in which the optimal reserve price is low for any fixed number of bidders $n\geq 2$.

However, the optimal reserve price in our setting may be non-unique. Suppose $c$ is the seller's valuation. When the number of bidders is sufficiently high, or $c$ is sufficiently low, there is a whole interval of optimal prices that includes all prices in $[0,c]$ (and possibly some higher prices). 
This is also compatible with empirical evidence: the reserve prices observed in practice are frequently not just low, but are lower than all plausible seller's valuations \citep{hasker2010ebay}. Explanations for this fact proposed in the literature include boundedly rational bidders who do not fully understand how participation rate depends on the reserve price while sorting among competing sellers \citep{jehiel2015absolute} and a combination of value interdependence and bidders' risk aversion \citep{hu2019low}. Even though all reserve prices below $c$ are weakly dominated by $c$ in our model (i.e., they yield a weakly lower revenue than $c$ for any fixed distribution of values), their worst-case optimality provides an additional (weak) explanation for why such prices may be observed in practice. 

Solving the maxmin problems posed in this paper directly by backward induction is, to the best of our knowledge, challenging. So we employ an indirect proof technique. In the first step, using a Lagrangian approach we identify a worst-case distribution $F^*$ if the reserve price is equal to seller's valuation, $r=c$ (this is easier than for other prices). $F^*$ is a binary distribution in the first setting (mean and upper bound on values are known) but has a continuous part in the second setting (mean and upper bound on variance are known). We use an analogy of Nature's problem to a textbook profit maximization problem by a competitive firm with nonconvex costs to minimize the Lagrangian pointwise. Denote the worst-case revenue under $r=c$ by $R^*$. In the second step, for each price $r\geq 0$ we identify a specific distribution $\hat{F}_r$ such that if the reserve price is $r$ and the distribution is $\hat{F}_r$, the seller's revenue is not more than $R^*$. This implies that $r=c$ is a maxmin price. Note that the distributions $\hat{F}_r$ need not be worst-case for respective prices. They are only bad enough to discourage the seller from choosing $r\neq c$, but not necessarily the worst. We find appropriate to call them \emph{threat distributions} -- those Nature may threaten to use to harm the seller if she deviates from $r=c$. This proof technique is similar to one used by \cite{he2020correlation}, who find a maxmin reserve price when a marginal distribution of values, but not their joint distribution, is known (but, of course, the construction of $\hat{F}_r$ is substantially different, as the set of possible distributions is very different from that in \cite{he2020correlation}). 

Sometimes (when the number of bidders is small or $c$ is high) Nature's threats are strong enough so that $r=c$ is the unique maxmin price; when the number of bidders is larger or $c$ is small, other prices, including zero (as noted above), may be maxmin. This indifference occurs because in this case the lowest point in support of $F^*$ happens to be strictly higher than $c$ and the worst-case distribution is still $F^*$ for all $r\leq c$. That is, it may be so that under the worst-case distribution sale always happens, and always happens at a price higher than both the seller's valuation and the reserve price, even when both are positive. We find this somewhat surprising.

In this paper, we address the question of optimal reserve price, but not a more general question of optimal mechanism for a fixed $n$. A technical difficulty that does not allow to use a strong duality approach, as in, e.g., \cite{suzdal2020anoptimal} (see section \ref{litreview}), is that the set of joint distributions of values feasible for Nature is not convex due to the independence constraint (and these constraints are nonlinear). However, we show that the second-price auction without a reserve (or a reserve equal to seller's valuation) is an asymptotically maxmin mechanism in a large class as $n$ grows without bound. This follows from the fact that the revenue guarantee of any individually rational mechanism cannot be higher than the known mean of value distribution, but the revenue guarantee of the second-price auction with a maxmin price converges to this mean as $n$ grows. The result does not even require that bidders play a Bayesian equilibrium. We compare the rates of the convergence of the maxmin revenue to the mean across different settings to obtain insights about the relative ``strength'' of Nature depending on the set of distributions available to it. 

The result that a simple auction is asymptotically optimal must be reconciled with results in \cite{segal2003optimal} who shows that when agents' values are independent draws from an unknown distribution, ``bootstrap'' schemes that estimate the distribution from bidder's reports and set individual prices based on estimates derived from other agents' reports, can asymptotically extract the entire full-distributional-information revenue. These schemes differ substantially from a classic second-price auction. The apparent discrepancy between the our asymptotic result and Segal's is due to the fact that optimization criteria are different: while we employ the maxmin criterion, approximating the full-distributional-information outcome is about minimizing \emph{regret}. When Nature chooses a distribution to minimize revenue itself rather than to maximize losses relative to full information, the ``bootstrap'' schemes cannot be significantly superior to the simple auction.

\subsection{Related literature}\label{litreview}
 
This paper contributes to the growing literature on robust mechanism design. The closest contributions to ours are \cite{carrasco2018optimal}, \cite{he2020correlation}, \cite{koccyiugit2020distributionally}, \cite{suzdal2020anoptimal}, \cite{che2019robust} and \cite{neeman2003effectiveness}. \cite{carrasco2018optimal} study the problem of selling the good to a \emph{single} agent by a seller who maximizes worst-case expected revenue while knowing the first $N$ moments of distribution. They characterize the optimal randomized mechanism; also, they find the optimal deterministic posted price for the single agent for settings (1) and (2) of the present paper.\footnote{The optimal deterministic posted price for the case when the seller knows mean and variance of the value distribution has been found earlier by \cite{azar2013optimal}. In that paper, there is an infinite supply of the good which essentially reduces the environment to a single-agent one.}   \cite{he2020correlation} characterize the optimal deterministic in a second-price auction when the seller knows the marginal distribution of values but not their joint distribution. This setting may be seen as complementary to ours, as we assume that the marginal distribution is unknown, but a particular correlation structure (independence) is known. \cite{he2020correlation} also show that a second-price auction with no reserve is asymptotically optimal among all mechanisms, as in the present paper. Their proof technique is partially similar to ours. 

\cite{koccyiugit2020distributionally} find, among other results, the optimal deterministic reserve price when the seller knows a lower bound for the mean of values, an upper bound for values and there is no restriction on values' correlation structure. \cite{suzdal2020anoptimal} uses strong duality to find an optimal deterministic mechanism for a similar set of distributions where the known means can be heterogeneous. This mechanism happens to be a linear version of the Myersonian optimal auction. \cite{che2019robust} finds an optimal randomized reserve price in a second-price auction for the same set of distributions.
An early precursor to this literature, \cite{neeman2003effectiveness} finds the optimal reserve price in a second-price auction where the set of distributions is the same as in \cite{koccyiugit2020distributionally}, but the criterion is the worst-case ratio of expected revenue to expected full surplus, rather than expected revenue itself.  

Some of the above papers find that with sufficiently many bidders, the robustly optimal reserve price is low. 
In \cite{he2020correlation} and \cite{che2019robust}, the optimal reserve price  converges to seller's value as number of bidders goes to infinity; in \cite{koccyiugit2020distributionally} and \cite{suzdal2020anoptimal}, the optimal reserve is equal to seller's value starting from a certain number of bidders. In contrast, in the present paper it is equal to seller's value for all $n\geq 2$. In this sense, the present paper's setting yields the most striking result among the existing ones. 

Other papers seeking robustness to (payoff) type distributions include  \cite{carrasco2018robust}, \cite{auster2018robust}, \cite{bergemann2011robust}, \cite{bergemann2008pricing}. 
\cite{carroll2017robustness}, \cite{giannakopoulos2019robust} and \cite{chen2019distribution} tackle the problem of selling multiple goods to a single agent under unknown type distribution. \cite{bose2006optimal} and \cite{wolitzky2016mechanism} study mechanism design when agents themselves are maxmin with respect to the distribution of other agents' types.  \cite{wolitzky2016mechanism} uses a specification of sets of possible distributions similar to ours: bounds on support and the mean are known. He gives a microfoundation for this specification which we mentioned earlier. 

A separate strand of literature studies mechanisms robust to misspecification of agents' information structures, rather than the designer's prior. \cite{brooks2019optimal} identify an optimal mechanism in the common value setting, while \cite{du2018robust} identifies a simpler mechanism that asymptotically extracts full surplus. \cite{bergemann2017first} find the optimal robust reserve price in a first-price auction under possible misspecification of agents' information structures.
Relatedly, \cite{chung2007foundations} and \cite{chen2018revisiting} consider robustness to type distributions in a ``rich'' type spaces and identify conditions under which there exist maxmin foundations for dominant-strategy mechanisms. 

Others kinds of robustness explored in the literature include robustness to technology or preferences, robustness to strategic behavior and robustness to interaction among agents and are surveyed by \cite{carroll2018robustness}.

Finally, this paper is related to the literature seeking to explain low reserve prices observed in real-life auctions. \cite{levin1994equilibrium} show that it may be optimal to use a reserve price equal to seller's valuation under endogenous costly entry while \cite{levin1996optimal} show that unlike the textbook IPV case, the optimal reserve converges to the seller's valuation when values are private but are only conditionally iid. As mentioned above, \cite{jehiel2015absolute} and \cite{hu2019low} give explanations for why observed reserve prices are sometimes lower than the seller's valuation. 

\subsection{Organization of the paper} 
In section \ref{set-up}, we describe the set-up. In section \ref{knownmean}, we state and prove the results for the case of known mean and an upper bound on values; in section \ref{knownmeanandvariance} we do the same for the case of known mean and an upper bound on variance. In section \ref{largenumber}, we show that second-price auction without a reserve is an asymptotically optimal mechanism among all mechanisms and compare rates of convergence of the maxmin revenue to its asymptotic value (which is simply the mean of value distribution) for different settings. Section \ref{disc} concludes.

\section{The model}\label{set-up}
Consider the standard second-price auction with one object for sale and $n\geq 2$ bidders (for an extension to first-price auctions, see section \ref{disc}). The valuations of the bidders are iid with some distribution $F$, but $F$ is not fully known to the seller. We consider two specifications of seller's information. In the first one, the seller knows that $\mathbb{E}(v_i)=m>0$ and that $v_i \in[0, \vmax]$, where $\vmax>m$. In the second specification, the seller knows that  $v_i\geq 0$, $\mathbb{E}(v_i)=m>0$ and that $Var(v_i)\leq \sigma^2$.
The variance constraint is specified as inequality, rather than equality due to reasons discussed in \cite{carrasco2018optimal} -- with an equality constraint for the highest moment, the set of distributions may not be compact; also, the proof is somewhat easier to state. However, when variance is known exactly, the results are the same (see section \ref{disc}).

No further restrictions on $F$ are made. In particular, atoms in $F$ are allowed and $F$ is not necessarily regular in Myerson sense.
Denote the set of feasible distributions if only mean and upper bound on values is known by $\Delta_1(m,\vmax)$ and if both mean and upper bound on variance are known by $\Delta_2(m,\sigma^2)$.

The seller's own valuation for the object, $c$, may be higher than bidders' values. We assume that $c\in[0,m)$. One reason for not normalizing $c$ to zero is that we would like to distinguish between a reserve price equal to $c$ and zero reserve price. Also, we would like to allow distributions that put mass below $c$.

The seller wishes to set a deterministic, public reserve price that maximizes revenue. Denote the expected revenue (including the seller's valuation $c$) if the distribution of values is $F$ and the reserve price is $r$ by $R(F,r)$. %
We consider the following problem:
\beq\label{problem}
\sup\limits_{r\geq 0}\inf\limits_{F\in \Delta} R(F,r),
\eeq
where either $\Delta=\Delta_1(\vmax,m)$ or $\Delta=\Delta_2(m,\sigma^2)$. In other words, the seller wishes to set a price in such a way that the worst-case guarantee of revenue given her information is maximal. Denote by $\underline{R}(r)$ the value of the infimum in \eqref{problem}, i.e. the value of this guarantee.  We call any price $r$ that solves \eqref{problem} a \emph{maxmin reserve price} and dthe corresponding expected revenue $R^*$ the \emph{maxmin revenue}.

To present the analysis of the above problem, it will be convenient to us to  phrase it as a zero-sum Stackelberg game between the seller and adversarial Nature in which the seller moves first by setting a price $r$ and then Nature, upon seeing $r$, chooses a distribution $F$ from the choice set $\Delta$.

Denote by $F(v)$ the cdf of the distribution $F$ and by $v_{(i)}$ the $i$th-highest component of the vector of valuations $v$. Then, assuming the bidders play dominant strategies, the function $R(F,r)$ is given by
\begin{multline}\label{revenue}
R(F,r)=c\cdot P(v_{(1)}\leq r)+ r\cdot P(v_{(1)}>r\cap v_{(2)}\leq r)+\mathbb{E}_{F\sim F\sim\cdots\sim F}\left[v_{(2)}\cdot\mathbf{1}_{\{v_{(2)}>r\}}\right]=\\
r-(r-c)F^n(r)+\int_r^{\infty}\(1-nF^{n-1}(v)+(n-1)F^n(v)\right)dv,
\end{multline}
where we used an expression relating the cdf of second-order statistic to the cdf of the parent distribution $F$ and the identity $\mathbb{E}(X)=\int_0^{+\infty}(1-F(v))dv$ for a nonnegative random variable $X$ with cdf $F$. 

Expressing the expected revenue in terms of the cdf $F(v)$ allows to simultaneously cover all distributions regardless of presence of atoms, and also allows to reduce optimization over distributions to optimization over functions.

\section{Known mean and upper bound on values}\label{knownmean}
\subsection{The result}
Consider the problem \eqref{problem} with $\Delta=\Delta(m,\vmax)$. Among other results, \cite{carrasco2018optimal} solve this problem for $n=1$, i.e., solve the monopolistic pricing problem. They show that there exists a unique maxmin price that exceeds seller's costs.

In contrast, a main result of this paper is when there are at least two bidders, a reserve price equal to seller's opportunity costs $c$ is maxmin.

\begin{theorem}[\textbf{Main result I}]\label{main}
Suppose the seller knows the mean of value distribution $m$ and an upper bound on values $\vmax$. Then, the set of prices $r^*$ solving problem \eqref{problem} includes the seller's valuation $c$.
\end{theorem}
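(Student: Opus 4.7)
The plan is to follow the two-step indirect strategy outlined in the introduction. Denote $\underline{R}(r):=\inf_{F\in\Delta_1(m,\vmax)}R(F,r)$. I will (i) exhibit a specific $F^*\in\Delta_1(m,\vmax)$ conjectured to attain $\underline{R}(c)$ and set $R^*:=R(F^*,c)$, and (ii) for every $r\neq c$ construct a ``threat'' distribution $\hat F_r\in\Delta_1(m,\vmax)$ with $R(\hat F_r,r)\leq R^*$. Step (i) gives $R^*=\underline{R}(c)$; step (ii) gives $\underline{R}(r)\leq R(\hat F_r,r)\leq R^*=\underline{R}(c)$ for every $r\neq c$, so $r=c$ solves \eqref{problem}. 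As the author stresses, $\hat F_r$ need not be worst-case for reserve $r$, only bad enough to deter deviation from $c$.

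For step (i), the $r=c$ case is tractable because the term $(r-c)F^n(r)$ in \eqref{revenue} drops out and
\[
R(F,c)=c+\int_c^{\vmax}\phi(F(v))\,dv,\qquad \phi(x):=1-nx^{n-1}+(n-1)x^n,
\]
with $\phi$ strictly decreasing on $[0,1]$. Attaching a Lagrange multiplier $\lambda>0$ to the mean constraint $\int_0^{\vmax}(1-F)\,dv=m$, the Lagrangian splits pointwise: on $[0,c)$ the only $F$-dependent term is $\lambda F(v)$, minimized at $F\equiv 0$ (so no mass sits strictly below $c$), and on $[c,\vmax]$ the integrand $\phi(F)+\lambda F$ is independent of $v$, so its pointwise minimizer is a constant $\bar F\in(0,1)$ determined by $\lambda=n(n-1)\bar F^{n-2}(1-\bar F)$. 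This is the ``competitive firm with nonconvex costs'' picture the author flags. The resulting $F^*$ is therefore a two-point law with atoms at $c$ and $\vmax$; the mean constraint pins the top mass to $q^*=(m-c)/(\vmax-c)$. I would then verify that the interior critical point beats the boundary candidates $F=0,1$ of $\phi+\lambda F$, invoke strong duality (compactness of $\Delta_1(m,\vmax)$ in the weak topology plus continuity of $R(\cdot,c)$), and evaluate $R^*$ in closed form.

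For step (ii), I would split the reserve range into three regimes. For $r\leq c$: a direct differentiation gives $\partial_r R(F,r)=nF^{n-1}(r)\bigl[(1-F(r))+(c-r)f(r)\bigr]\geq 0$ pointwise on $[0,c]$, so $\hat F_r:=F^*$ works since $R(F^*,r)\leq R(F^*,c)=R^*$. For $r\geq\vmax$: take $\hat F_r:=F^*$ again; $F^*$ is supported in $[0,\vmax]$, no bid clears $r$, the object never sells, and $R(F^*,r)=c\leq R^*$. The delicate regime is $r\in(c,\vmax)$, where explicit calculation shows $R(F^*,r)>R^*$ (nature's worst-case law from the $r=c$ problem is exploited by raising the reserve), so a different $\hat F_r$ is required. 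I would try a one-parameter family of two-point laws — e.g.\ binary on $\{c,v_r\}$ or on $\{r,\vmax\}$ with masses pinned by the mean constraint and the second support point chosen to minimize revenue within the family — and verify the inequality $R(\hat F_r,r)\leq R^*$ uniformly in $r$. Closing this uniform inequality is the main obstacle: the revenue is a degree-$n$ polynomial in $r$ and the competing laws $F^*$ and $\hat F_r$ have different atom locations, so the inequality cannot be obtained from a monotonicity or dominance argument. The key conceptual lever, per the author, is that $\hat F_r$ need only be bad enough (not optimal) for nature, which gives exactly the slack needed to push through the algebra.
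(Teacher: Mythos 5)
Your Step (i) has a genuine gap: you assert that the pointwise Lagrangian minimizer on $[c,\vmax]$ is always an interior critical point and hence that $F^*$ is always the two-point law on $\{c,\vmax\}$ with top mass $(m-c)/(\vmax-c)$. This is only correct when $(n-1)^2 < (\vmax-c)/(m-c)$. The whole point of the ``competitive firm with U-shaped costs'' analogy is that for $n\geq 3$ the integrand $\phi(F)+\lambda F$ is nonconvex, and when the multiplier is below the ``minimum average cost'' level the pointwise minimizer jumps to the corner $F=0$, not the interior stationary point. Concretely, when $(n-1)^2 \geq (\vmax-c)/(m-c)$ the candidate constant $\bar F=(\vmax-m)/(\vmax-c)$ is $\leq q^*_n := 1-1/(n-1)^2$ and does \emph{not} minimize the Lagrangian; the actual worst case is binary on $\{\vmin^*,\vmax\}$ with $\vmin^*=m-(\vmax-m)/((n-1)^2-1)>c$. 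That distribution yields strictly lower revenue than $\delta_{c,\vmax}$. Since your $R^*:=R(\delta_{c,\vmax},c)$ then strictly overshoots $\underline R(c)$, the sandwich $\underline R(r)\leq R(\hat F_r,r)\leq R^*$ no longer delivers $\underline R(r)\leq \underline R(c)$ — the argument breaks exactly in the regime (large $n$, small $c$) the theorem most needs to cover. You flag that one must ``verify that the interior critical point beats the boundary candidates'' but conclude as if it always does.

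Step (ii) also remains open. You correctly diagnose that $F^*$ fails as a threat for $r\in(c,\vmax)$ and propose $\delta_{r,\vmax}$, which is indeed the paper's choice for $r\in[c,m)$; but you stop short of the computation that makes it work, namely showing $\frac{d}{dr}R(\delta_{r,\vmax},r)=n(n-2)p\,z(p)-np^{n-1}p'(r-c)<0$ for $r>c$ (with $p(r)=(\vmax-m)/(\vmax-r)$ and $z(y)=y^{n-1}-y^{n-2}$), so that $R(\delta_{r,\vmax},r)$ declines from $R^*$ as $r$ moves right of $c$. Moreover $\delta_{r,\vmax}$ is infeasible for $r\geq m$ (negative mass), so that subinterval of your ``delicate regime'' needs its own threat — $\delta_m$, giving revenue $c$, which the paper handles separately. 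Finally, once you correct Step (i), the $r<c$ regime in the case $c\leq\vmin^*$ also needs the corrected $F^*=\delta_{\vmin^*,\vmax}$ as the threat (it continues to be worst-case for all $r\leq c$, since $F^*(r)=0$ there), rather than $\delta_{c,\vmax}$.
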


\subsection{The proof} 
In this section, we provide the proof of theorem \ref{main}.
The plan of attack, as outlined in the introduction, consists of two steps. In the first step, we identify the worst-case distribution $F^*$ when $r=c$ using a Lagrangian method. In the second step, for any $r\geq 0$ we specify a distribution $\hat{F}_r$ such that $R(\hat{F}_r,r)\leq R(F^*,c)$, without making any claim that $\hat{F}_r$ is worst-case. This approach allows to circumvent the need to solve for a worst-case distribution for each $r$. Identifying a worst-case distribution for $r=c$ is much simpler than for other prices because the term $-(r-c)F^n(r)$ in \eqref{revenue} disappears when $r=c$, so that the expected revenue depends on $F$ only though an integral. In the previous version of this paper, \citep{suzdal2018}, we do identify worst-case distributions for each $r\geq 0$ but this requires, at a point, tedious second-order analysis and works only for the case of a bound on values, but not a bound on variance. 

\subsubsection{First step}
Suppose $r=c$. Then, Nature's problem, as a problem of choosing a function $F(v)$, may be written as
\begin{align}
\min\limits_{F(\cdot)} & \left(c + \int_c^{\vmax}\(1-nF^{n-1}(v)+(n-1)F^n(v)\right)dv\right) \label{controlproblem}\\
\mbox{s.t. } & \int_0^{\vmax}(1-F(v))dv=m \label{mean}\\
& F(v)\in [0,1] \mbox{ for all } v\in[0,\vmax] \label{bounds}\\
& F(v) \mbox{ is nondecreasing}\label{incr} \\
& F(v) \mbox{ is right-continuous}\label{cont} 
\end{align}
The constraint \eqref{mean} is the mean constraint, while the constraints  
\eqref{bounds}-\eqref{cont} are necessary and sufficient to ensure that the function $F(\cdot)$ chosen by Nature is a cdf. The constraint \eqref{cont} is not an issue; the monotonicity constraint may a priori be an issue, but, fortunately, turns out not to be.

To solve the problem \eqref{controlproblem}-\eqref{cont}, we first prove that it is without loss of generality to look at distributions putting no mass below $c$. This allows to make the integration bounds in the objective \eqref{controlproblem} and the constraint \eqref{mean} the same.

\begin{lemma}\label{nomassbelowc}
For every feasible cdf $F$ in problem \eqref{controlproblem}-\eqref{cont}, there exists a feasible cdf $\tilde{F}$ putting no mass below $c$ such that the expected revenue \eqref{controlproblem} is weakly lower under $\tilde{F}$ than under $F$.
\end{lemma}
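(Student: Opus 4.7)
The plan is to construct $\tilde F$ from $F$ by first wiping out all mass strictly below $c$ (piling it up as an atom at $c$) and then compensating the resulting increase in the mean by shifting mass within $[c,\vmax]$ downward. The key observation driving the construction is that the integrand in \eqref{controlproblem}, namely $\phi(x):=1-nx^{n-1}+(n-1)x^n$, satisfies $\phi'(x)=n(n-1)x^{n-2}(x-1)\leq 0$ on $[0,1]$, so $\phi$ is weakly decreasing. Hence any modification of $F$ on $[c,\vmax]$ that pointwise \emph{raises} the cdf weakly decreases the objective; I only need to find a feasible such modification.

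Concretely, set $\epsilon:=\int_0^c F(v)\,dv\geq 0$ (the ``mean surplus'' that deleting the below-$c$ mass creates), and for $\delta\in[0,1]$ define
\[
\tilde F_\delta(v)=\begin{cases}0 & \text{if } v<c,\\ \min\{F(v)+\delta,\,1\} & \text{if } v\in[c,\vmax].\end{cases}
\]
This is a valid cdf on $[0,\vmax]$ for every $\delta$: it is nondecreasing, right-continuous, lies in $[0,1]$, and reaches $1$ at $\vmax$ (since $F(\vmax)=1$). Using \eqref{mean},
\[
\int_0^{\vmax}\!\bigl(1-\tilde F_\delta(v)\bigr)\,dv \;=\; c + \int_c^{\vmax}\!(1-F(v))\,dv - g(\delta) \;=\; m + \epsilon - g(\delta),
\]
where $g(\delta):=\int_c^{\vmax}\min\{\delta,1-F(v)\}\,dv$.

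Now $g(0)=0$ and $g(1)=\int_c^{\vmax}(1-F(v))\,dv=m-c+\epsilon\geq\epsilon$, where the last inequality uses $c\leq m$. Dominated convergence yields that $g$ is continuous in $\delta$, so by the intermediate value theorem there exists $\delta^\ast\in[0,1]$ with $g(\delta^\ast)=\epsilon$, and then $\tilde F:=\tilde F_{\delta^\ast}$ satisfies the mean constraint \eqref{mean}. By construction $\tilde F$ puts no mass below $c$, is feasible in the sense of \eqref{bounds}--\eqref{cont}, and $\tilde F(v)\geq F(v)$ for every $v\in[c,\vmax]$. Since $\phi$ is decreasing, the objective in \eqref{controlproblem} is weakly lower under $\tilde F$ than under $F$, completing the proof.

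The only step requiring care is the application of IVT in the third paragraph: I need $g(1)\geq\epsilon$, i.e., enough ``room'' within $[c,\vmax]$ to absorb the mean shift created by pushing mass up to $c$. This is exactly where the standing assumption $c<m$ enters; without it, the construction could fail because the mean would be pushed above what the interval $[c,\vmax]$ can support. Everything else is bookkeeping.
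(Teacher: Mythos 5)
Your proof is correct, and the high-level idea matches the paper's: delete the mass below $c$, raise the cdf on $[c,\vmax]$ to restore the mean, then invoke the monotonicity of the integrand $\phi(x)=1-nx^{n-1}+(n-1)x^n$. Where you differ is in the explicit construction of $\tilde F$. You use an additive shift $\tilde F_\delta(v)=\min\{F(v)+\delta,1\}$ and then need a continuity/IVT argument (plus the check $g(1)\geq\epsilon$) to find the $\delta^\ast$ that restores the mean. The paper instead takes a convex combination with the point mass at $c$: setting $\beta:=(m-c)/\int_c^{\vmax}(1-F(v))\,dv\in(0,1]$ and $\tilde F(v):=\beta F(v)+(1-\beta)$ for $v\geq c$. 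This makes the mean restoration an identity ($\int_c^{\vmax}(1-\tilde F)\,dv=\beta\int_c^{\vmax}(1-F)\,dv=m-c$, no IVT needed), and the inequality $\tilde F-F=(1-\beta)(1-F)\geq 0$ is immediate. Both arguments are fine; the multiplicative one is cleaner because it pins down the required modification in closed form and sidesteps the existence argument, while yours is arguably more mechanical and would generalize without modification to situations where a closed-form scaling is not available.
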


\pfof{lemma \ref{nomassbelowc}}{
Take a feasible cdf $F$. Define $\beta:=(m-c)/\int_c^{\infty}(1-F(v))dv$. Because $F$ is feasible, $\beta\in(0,1]$. Then consider $$\tilde{F}(v):=
\begin{cases}
0, & v<c \\
\beta F(v)+(1-\beta), & v\geq c.
\end{cases}$$
By construction, $\int_0^{\infty}(1-\tilde{F}(v))dv=m$ so $\tilde{F}$ is feasible and puts no mass below $c$. The revenue is weakly lower under $\tilde{F}$ than under $F$ because $\tilde{F}(v)\geq F(v)$ for all $v\geq c$ and the integrand in \eqref{controlproblem} is decreasing in $F$.
\eop}

The intuition behind lemma \ref{nomassbelowc} is straightforward. Suppose there is some probability mass strictly below $r=c$. By transferring it all to $r=c$ Nature will not change the expected revenue, but will increase the mean of the distribution. Then it can restore the mean by redistributing mass within the set $\{v: v\geq c\}$ towards lower values which will reduce the revenue. 

In light of lemma \ref{nomassbelowc}, the mean constraint can be now rewritten as 
\beq\label{meannew}
\int_c^{\vmax}(1-F(v))dv=m-c
\eeq

This allows to proceed to forming a Lagrangian.

Define the Lagrangian by
\beq\label{lagrangian}
\mathcal{L}(F,\lambda):=\int_c^{\vmax}\left(1-n F^{n-1}(v)+(n-1)F^n(v)+\lambda(1- F(v))\right)dv
\eeq
In what follows, we will minimize the Lagrangian pointwise. 
The validity of the Lagrangian approach rests on the following lemma:
\begin{lemma}\label{sufficiency0}
Suppose $F_0$ is a cdf that minimizes the Lagrangian among all cdfs for some $\lambda\in\mathbb{R}$ and satisfies \eqref{meannew}. Then, $F_0$ solves the problem \eqref{controlproblem}.
\end{lemma}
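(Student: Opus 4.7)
The plan is to apply the Lagrangian sufficiency principle in its most elementary form. The key observation is that on the set of cdfs that satisfy the mean constraint \eqref{meannew}, the Lagrangian \eqref{lagrangian} and the original objective \eqref{controlproblem} differ only by an additive constant that does not depend on the particular cdf. Hence any cdf that minimizes \eqref{lagrangian} over all cdfs \emph{and} satisfies \eqref{meannew} must, a fortiori, minimize \eqref{controlproblem} over the constrained set.

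Concretely, I would first invoke Lemma \ref{nomassbelowc} to reduce \eqref{controlproblem}--\eqref{cont} to its restriction to cdfs with no mass below $c$, so that \eqref{mean} becomes \eqref{meannew} and the integration ranges in the objective and the constraint align. I would then write the algebraic identity
$$\mathcal{L}(F,\lambda) \;=\; \Bigl[\,\text{objective of \eqref{controlproblem} at } F\Bigr] \,-\, c \,+\, \lambda\int_c^{\vmax}(1-F(v))\,dv,$$
which follows by splitting the integrand of \eqref{lagrangian}. For any cdf $F$ satisfying \eqref{meannew} the last term equals the constant $\lambda(m-c)$, so the Lagrangian value and the objective value differ by the $F$-independent constant $\lambda(m-c) - c$.

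Now fix an arbitrary feasible competitor $F$ in \eqref{controlproblem}--\eqref{cont}. By hypothesis $F_0$ minimizes $\mathcal{L}(\cdot,\lambda)$ over \emph{all} cdfs, and $F$ is in particular a cdf, so $\mathcal{L}(F_0,\lambda) \leq \mathcal{L}(F,\lambda)$. Applying the displayed identity to both sides---which is valid because $F_0$ satisfies \eqref{meannew} by assumption and $F$ does by feasibility---and cancelling the common constant $\lambda(m-c)-c$ yields that the objective value at $F_0$ is weakly smaller than the objective value at $F$. Since $F$ was arbitrary, $F_0$ solves \eqref{controlproblem}--\eqref{cont}.

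The argument contains no real obstacle: it is the familiar ``weak duality at a multiplier that restores primal feasibility forces primal optimality'' observation, and does not require convexity of the problem. The nontrivial work---exhibiting an $F_0$ and a $\lambda$ for which the hypotheses actually hold---is deferred to the subsequent step, where the Lagrangian must be minimized pointwise in $F(v)$ using the ``competitive firm with nonconvex costs'' analogy advertised in the introduction.
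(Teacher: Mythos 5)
Your proof is correct and follows essentially the same route as the paper's: restrict WLOG (via Lemma \ref{nomassbelowc}) to cdfs satisfying \eqref{meannew}, observe that on that set $\mathcal{L}(\cdot,\lambda)$ and the objective \eqref{controlproblem} differ by an $F$-independent constant, and then transfer the minimization of the Lagrangian over all cdfs to minimization of the objective over the constrained set. The paper phrases this as ``add $\mathcal{L}(F_0,\lambda)\leq\mathcal{L}(\tilde F,\lambda)$ to the equality of the $\lambda$-terms,'' while you phrase it as an explicit additive decomposition and cancellation, but the argument is identical.
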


\begin{proof}
Take $F_0$ and any other cdf $\tilde{F}$ satisfying \eqref{meannew} (which is without loss of generality by lemma \ref{nomassbelowc}). Because $F_0$ minimizes the Lagrangian, we have 
\[\mathcal{L}(F_0,\lambda)\leq \mathcal{L}(\tilde{F},\lambda).\]
Because both $F_0$ and $\tilde{F}$ satisfy \eqref{meannew},
\[-\lambda\int_c^{\vmax}(1-F_0(v))dv=-\lambda\int_c^{\vmax}(1-\tilde{F}(v))dv.\]
Summing the above relations, one gets that $R(F_0,c)\leq R(\tilde{F},c)$, as desired.
\eop 
\end{proof}

Call the integrand in \eqref{lagrangian} $H(F,\lambda)$\footnote{The notation stems from the fact that the integrand is equal to the Hamiltonian of the corresponding optimal control problem. The Minimum Principle (as applied to the relaxed problem) guarantees the existence of the Lagrange multiplier $\lambda$ such that the optimal $F$ minimizes the Lagrangian pointwise. However, in the formal proof we construct the multiplier explicitly and therefore do not have to rely on Minimum Principle.}. The first-order condition for the minimization of $H$ with respect to $F$ is 
\beq\label{foc}
\lambda=n(n-1)F^{n-2}(F-1).
\eeq 
If $n=2$, this equation has the unique solution, so the optimal $F$ is a constant which corresponds to a binary distribution on $\{c,\vmax\}$. For $n\geq 3$, however, $H(F,\lambda)$ is not convex in $F$. In fact, for $n\geq 3$, Nature's problem is isomorphic to a textbook profit maximization problem of a competitive firm with U-shaped marginal and average costs functions that chooses its ``output" $F$ given the ``price" $\lambda$. (The corresponding ``total cost function'' is $TC(F)=(n-1)F^n-nF^{n-1}$.) If the price is below the minimum of average costs ($\min AC$), the optimal output is zero; if the price is above $\min AC$, the optimal output is given by the minimum of the larger solution to \eqref{foc} and 1; and if the price is exactly equal to $\min AC$, both zero output and the output minimizing the AC are optimal.

Define $q^*_n:=1-\frac{1}{(n-1)^2}$. This is the ``output'' minimizing ``average costs''. Define also 
\[z(y):=y^{n-1}-y^{n-2}.\]
The ``supply curve" stemming from the pointwise minimization of the Lagrangian (maximization of $H$) is stated in the following lemma (we omit its proof):
\begin{lemma}\label{supplylemma}
\beq\label{supply}
\arg\max\limits_{F\in[0,1]}H(F,\lambda)=
\begin{cases}
\{0\},& \lambda<\lambda^*;\\
\left\{0,q^*_n\right\},& \lambda=\lambda^*;\\
\{\min\{\bar{y}(\lambda),1\}\}, & \lambda>\lambda^*,
\end{cases}
\eeq
where $\bar{y}(\lambda)$ is the larger solution to \eqref{foc} and 
\beq\label{lambdastar}
\lambda^*=\min\limits_{y\in[0,1]}\left[\frac{(n-1)y^n-ny^{n-1}}{y}\right]=n(n-1)z(q^*_n).
\eeq
\end{lemma}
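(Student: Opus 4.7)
The plan is to exploit the profit-maximization analogy flagged in the paragraph just before the lemma. Writing $H(F,\lambda) - H(0,\lambda) = TC(F) - \lambda F = F \cdot (AC(F) - \lambda)$ with $TC(F) := (n-1)F^n - nF^{n-1}$ and $AC(F) := TC(F)/F = (n-1)F^{n-1} - nF^{n-2}$, the comparison of any $F > 0$ to $F = 0$ reduces entirely to the sign of $AC(F) - \lambda$, and the whole lemma follows once I understand the shape of $AC$ on $(0, 1]$.

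The structural facts I would establish first are: differentiating gives $AC'(F) = F^{n-3}[(n-1)^2 F - n(n-2)]$, which vanishes uniquely on $(0, 1]$ at $F = q^*_n = n(n-2)/(n-1)^2 = 1 - 1/(n-1)^2$, so $AC$ is U-shaped with global minimum $\lambda^* := AC(q^*_n)$. A direct computation using $q^*_n - 1 = -1/(n-1)^2$ converts this value into $n(n-1)z(q^*_n)$, verifying \eqref{lambdastar}. The marginal cost $MC(F) := TC'(F) = n(n-1)F^{n-2}(F-1)$ (the right-hand side of \eqref{foc}) coincides with $AC(F)$ at $F = q^*_n$, is itself decreasing on $[0, (n-2)/(n-1)]$, and increasing on $[(n-2)/(n-1), 1]$. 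Consequently, the FOC \eqref{foc} has at most two roots on $[0, 1]$, namely $\underline{y}(\lambda) \leq (n-2)/(n-1) \leq \bar{y}(\lambda)$, and $\bar{y}(\lambda) \geq q^*_n$ iff $\lambda \geq \lambda^*$.

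The three cases of the lemma now fall out from the sign of $AC(F) - \lambda$. If $\lambda < \lambda^*$, then $AC(F) \geq \lambda^* > \lambda$ for every $F \in (0, 1]$, so $H(F,\lambda) > H(0,\lambda)$ throughout $(0, 1]$ and the unique optimizer is $F = 0$. If $\lambda = \lambda^*$, equality $AC(F) = \lambda$ holds only at $F = q^*_n$, producing the tie $\{0, q^*_n\}$. If $\lambda > \lambda^*$, the set $\{F \in (0, 1] : AC(F) < \lambda\}$ is a nontrivial subinterval, so $F = 0$ is strictly beaten; the interior optimizer must be a critical point lying in the convex region $F > (n-2)/(n-1)$, and the only such critical point is $\bar{y}(\lambda)$, capped at $1$ when $\bar{y}(\lambda) > 1$ (equivalently $\lambda > MC(1) = 0$).

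The main thing left to verify is that in this last case $\bar{y}(\lambda)$ beats $F = 1$ as well. This is immediate from strict convexity of $H(\cdot, \lambda)$ on $[(n-2)/(n-1), 1]$ (since $H_{FF} > 0$ there), which makes $\bar{y}(\lambda)$ the unique minimizer on that interval whenever it lies strictly inside. A minor annoyance is the degenerate case $n = 2$, where the concave branch of $H$ collapses and $q^*_n = 0$; here $AC(F) = F - 2$ is monotone on $(0, 1]$, $\lambda^* = \inf_{F \in (0, 1]} AC(F) = -2$, and the same formulas reduce cleanly (with $\bar{y}(\lambda) = 1 + \lambda/2$). Beyond these checks the argument is a routine transcription of the textbook shut-down decision for a competitive firm with U-shaped average costs, so I do not anticipate any real obstacle.
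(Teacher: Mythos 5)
Your proof is correct and is essentially the textbook competitive-firm shutdown argument that the paper invokes in the paragraph preceding the lemma but does not write out (the paper says ``we omit its proof''): decompose $H(F,\lambda)-H(0,\lambda)=F(AC(F)-\lambda)$, establish that $AC$ is U-shaped with unique minimum $\lambda^*=AC(q^*_n)=n(n-1)z(q^*_n)$ attained where $MC$ crosses $AC$, and read off the three cases, using strict convexity of $TC$ on $[(n-2)/(n-1),1]$ to pick $\bar y(\lambda)$ over $F=1$ and capping at $1$ when $\lambda>MC(1)=0$. One point worth flagging explicitly: the lemma as printed says $\arg\max H(F,\lambda)$ and the preceding sentence says ``(maximization of $H$)'', but $H$ is the integrand of the Lagrangian that Nature \emph{minimizes} pointwise (see Lemma~\ref{sufficiency0} and the line ``we will minimize the Lagrangian pointwise''), so the intended operator is $\arg\min$ --- you silently and correctly adopt this interpretation throughout (``$H(F,\lambda)>H(0,\lambda)$ \ldots so the unique optimizer is $F=0$'', ``$\bar y(\lambda)$ the unique minimizer'').
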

$\lambda^*$ is ``$\min AC$" (see Figure~\ref{fig:supplycurve} for the case $n=3$).

\begin{figure}[h]]
\centering  
\begin{tikzpicture}[>=triangle 45, xscale=5,yscale=2.5]
\draw[->] (0,0) -- (1.3,0) node[below] {$F$};
\draw[->] (0,-1.7) -- (0,0.3) node[left] {$\lambda$};
\draw[thick, domain=0:1] plot (\x, {6*\x*(\x-1)});
\node[right] at (0.7,-1.4){``MC''};
\node[right] at (1.05,-1){``AC''};
\draw[red, ultra thick, domain=0.75:1] plot (\x, {6*\x*(\x-1)});
\draw[thick,domain=0:1.3] plot(\x,{2*\x*\x-3*\x});
\draw[red, ultra thick] (0,-1.7) -- (0,-9/8);
\draw[red, ultra thick] (1,0)--(1,0.3);
\draw[dashed] (0,-9/8)--(3/4, -9/8)--(3/4, 0);
\node[left] at (0,-9/8) {$\lambda^*$};
\node[above] at (3/4,0) {$q^*_3=0.75$};
\end{tikzpicture}
\caption{The pointwise minimization of the Lagrangian for $n=3$ (the argmin correspondence is in red). }\label{fig:supplycurve}
\end{figure}
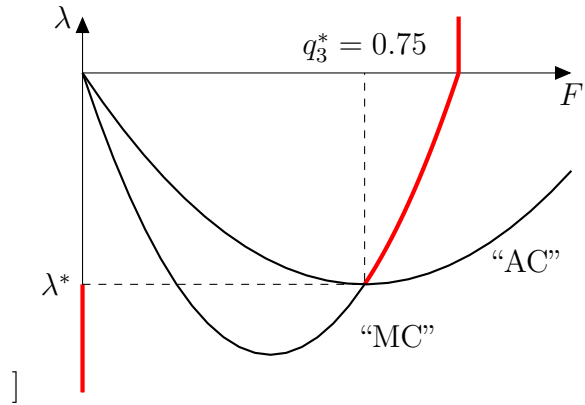

Fortunately, a value of $\lambda$ can always be found such that $F(v)$ maximizes $H(F,\lambda)$ for all $v\in[c,\vmax]$ and the mean constraint is satisfied. This leads to the identification of worst-case distributions. 
Define 
\[\vmin^*:=
\begin{cases}
0, & n=2;\\
\max\left\{m-\frac{\vmax-m}{(n-1)^2-1},0\right\} & n\geq 3.
\end{cases}\]

Denote by $\delta_{a,b}$ a binary distribution with support $\{a,b\}$ and mean $m$. Note that if $\vmin^{*}>0$, the distribution $\delta_{\vmin^{*},\vmax}$ puts a probability of $q^*_n$ on $\vmin^{*}$. 

\begin{proposition}\label{wcmean}
Suppose $r=c$. Then the distribution $\delta_{\max\{\vmin^{*},c\},\vmax}$ solves problem \eqref{controlproblem}-\eqref{cont}.
\end{proposition}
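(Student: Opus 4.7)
The plan is to apply lemma \ref{sufficiency0}: I will exhibit a Lagrange multiplier $\lambda$ for which the cdf $F^*$ of the candidate distribution $\delta_{\max\{\vmin^*,c\},\vmax}$ minimizes the Lagrangian integrand $H(\cdot,\lambda)$ pointwise for $v\in[c,\vmax)$, and separately verify the mean constraint \eqref{meannew}. The mean constraint holds by construction, since $F^*$ is a binary distribution with mean $m$ by definition. So the substantive task is choosing the right $\lambda$ and reading off pointwise optimality from the ``supply correspondence'' in lemma \ref{supplylemma}.

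Let $v^\dagger:=\max\{\vmin^*,c\}$ and $p:=(\vmax-m)/(\vmax-v^\dagger)$, so that on $[c,\vmax)$ the cdf $F^*$ takes the value $0$ on $[c,v^\dagger)$ and the value $p$ on $[v^\dagger,\vmax)$. I would split into two cases according to whether $v^\dagger=\vmin^*$ or $v^\dagger=c$. In the first case, $\vmin^*>c$ (and necessarily $n\geq 3$), and the definition of $\vmin^*$ was engineered so that the mean-$m$ binary distribution on $\{\vmin^*,\vmax\}$ puts probability exactly $q^*_n=1-1/(n-1)^2$ on $\vmin^*$ (a direct algebraic check). Thus $F^*$ only takes the values $0$ and $q^*_n$ on $[c,\vmax)$, and setting $\lambda=\lambda^*$ makes both of these pointwise minimizers of $H(\cdot,\lambda^*)$ by lemma \ref{supplylemma}.

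In the second case, $\vmin^*\leq c$ and $F^*\equiv p$ on $[c,\vmax)$. The key observation is that $\vmin^*\leq c$ rearranges to $(m-c)(n-1)^2\leq \vmax-c$, which is equivalent to $p\geq q^*_n$ (a one-line manipulation). If $p=q^*_n$ I again take $\lambda=\lambda^*$; if $p>q^*_n$, I take $\lambda=n(n-1)p^{n-2}(p-1)\in(\lambda^*,0)$, so that $p=\bar{y}(\lambda)\in(q^*_n,1)$ is the unique pointwise minimizer of $H(\cdot,\lambda)$ by the third branch of \eqref{supply}. In either subcase $F^*$ minimizes the Lagrangian pointwise, and the mean constraint is satisfied by construction, so lemma \ref{sufficiency0} closes the argument.

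I expect the main obstacle to be lining up the case analysis with the non-convex ``supply'' correspondence \eqref{supply}: the delicate point is that when the two atoms $\{\vmin^*,\vmax\}$ straddle the U-shape of the ``average cost'' curve, the only way the lower atom can be compatible with pointwise Lagrangian minimization is to sit at the bottom $q^*_n$ of that U-shape (forcing $\lambda=\lambda^*$), whereas when the lower atom is pushed up to $c$ the corresponding probability $p$ lies strictly on the upward branch and picks out a $\lambda>\lambda^*$. The threshold $\vmin^*\leq c$ is exactly the algebraic condition separating these two regimes, so once this equivalence is verified the rest is bookkeeping.
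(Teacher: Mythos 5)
Your proof is correct and follows essentially the same route as the paper: verify the mean constraint by construction, then exhibit a multiplier $\lambda$ under which the candidate cdf minimizes the Lagrangian integrand pointwise via Lemma~\ref{supplylemma}, and invoke Lemma~\ref{sufficiency0}. The only cosmetic difference is that you split the $\vmin^*\leq c$ branch into the subcases $p=q^*_n$ (taking $\lambda=\lambda^*$) and $p>q^*_n$ (taking $\lambda=n(n-1)z(p)>\lambda^*$), whereas the paper folds the knife-edge $p=q^*_n$ into the $c\leq\vmin^*$ branch; both treatments are valid, and your algebraic identifications ($\vmin^*\leq c\Leftrightarrow p\geq q^*_n$, and $p=\bar y(\lambda)$ since $p>(n-2)/(n-1)$ lies on the increasing branch of $z$) are sound.
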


\pfof{proposition \ref{wcmean}}{
Suppose $c>\vmin^*$. This is equivalent to $(n-1)^2<\frac{\vmax-c}{m-c}$. The binary distribution on $\{c,\vmax\}$ is described by a cdf $F^*(v)=\frac{\vmax-m}{\vmax-c}$ for all $v\in[c,\vmax)$. Denote $p=\frac{\vmax-m}{\vmax-c}$.  Consider $\lambda_0 =n(n-1)z(p)$. Because $(n-1)^2<\frac{\vmax-c}{m-c}$, $p>q^*_n$ so $\lambda_0>\lambda^*$. Thus, by lemma \ref{supplylemma}, $F^*$ minimizes the Lagrangian for $\lambda=\lambda_0$. By lemma \ref{sufficiency0}, $F^*$ solves the problem  \eqref{controlproblem}-\eqref{cont}.

Now suppose $c\leq \vmin^*$, i.e. $(n-1)^2\geq \frac{\vmax-c}{m-c}$. The binary distribution on $\{\vmin^*,\vmax\}$ is such that its cdf $F^*$ takes values 0 and $q^*_n$ on $[c,\vmax)$. Consider $\lambda_0 =\lambda^*$. Thus, by lemma \ref{supplylemma}, $F^*$ minimizes the Lagrangian for $\lambda=\lambda_0$. By lemma \ref{sufficiency0}, $F^*$ solves the problem  \eqref{controlproblem}-\eqref{cont}.
\eop}

It is instructive to consider a specific numeric example. Note that if $c=0$, Nature's problem for $r=c$ is one of finding a distribution minimizing the expectation of second-order statistic in a sample given the known mean and upper bound. 

\begin{example}
Suppose $n=3$, $\vmax=1$ and $m=1/2$, and $r=c=0$. Then, under the worst-case distribution the valuation of each bidder is equal to 1/3 with probability 3/4 and is equal to 1 with probability 1/4. The worst-case expected revenue is equal to 7/16.
\end{example}

As noted in the introduction, the fact that the support of the worst-case distribution may be bounded away from $c$ is a reason for why the maxmin reserve price may not be unique. 

\subsubsection{Second step}
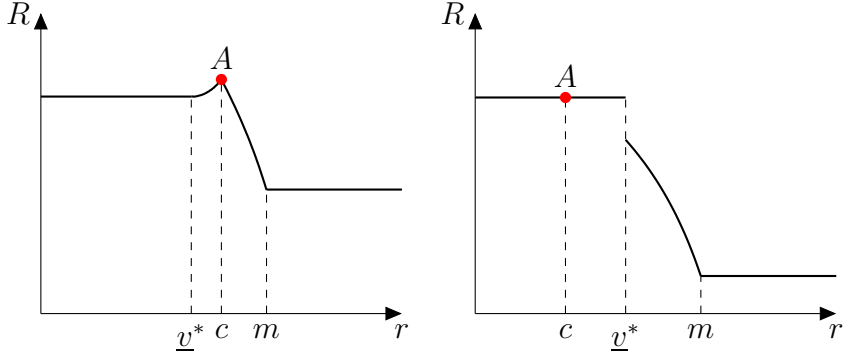
\begin{figure}[h!]
\centering
\begin{tikzpicture}[>=triangle 45, xscale=6,yscale=33]
\draw[->] (0,0) -- (0.8,0) node[below] {$r$};
\draw[->] (0,0) -- (0,0.1212) node[left] {${R}$};

\draw[thick] (0,7/16-0.35)--(0.3333,7/16-0.35);
\draw[thick, domain=0.3333:0.4] 
plot (\x, {1-(1-\x)*pow(0.5/(1-\x),3)-3*(0.5-\x)*pow(0.5/(1-\x),2)-0.35});

\draw[thick, domain=0.4:0.5] 
plot (\x, {1-(1-0.4)*pow(0.5/(1-\x),3)-3*(0.5-\x)*pow(0.5/(1-\x),2)-0.35});
\draw[thick] (0.5,0.05)--(0.8,0.05);
\node[below] at (0.4,0){$c$};
\node[below] at (0.5,0){$m$};
\node[below] at (0.3333,0){$\vmin^*$};
\draw[dashed] (0.5,0)--(0.5,0.05);
\draw[dashed] (0.4,0)--(0.4,0.4444-0.35);
\draw[dashed] (0.3333,0)--(0.3333,7/16-0.35);
\node[circle,fill=red,inner sep=1.5pt,minimum size=0pt] at (0.4,0.4444-0.35) {};
\node[above] at (0.4,0.4444-0.35){$A$};

\end{tikzpicture}
\begin{tikzpicture}[>=triangle 45, xscale=6,yscale=10]
\draw[->] (0,0) -- (0.8,0) node[below] {$r$};
\draw[->] (0,0) -- (0,0.4) node[left] {${R}$};

\draw[thick] (0,7/16-0.15)--(0.3333,7/16-0.15);

\draw[thick, domain=0.3333:0.5] 
plot (\x, {1-(1-0.2)*pow(0.5/(1-\x),3)-3*(0.5-\x)*pow(0.5/(1-\x),2)-0.15});
\draw[thick] (0.5,0.2-0.15)--(0.8,0.2-0.15);
\node[below] at (0.2,0){$c$};
\node[below] at (0.5,0){$m$};
\node[below] at (0.3333,0){$\vmin^*$};
\draw[dashed] (0.5,0)--(0.5,0.2-0.15);
\draw[dashed] (0.2,0)--(0.2,7/16-0.15);
\draw[dashed] (0.3333,0)--(0.3333,7/16-0.15);
\node[circle,fill=red,inner sep=1.5pt,minimum size=0pt] at (0.2,7/16-0.15) {};
\node[above] at (0.2,7/16-0.15){$A$};

\end{tikzpicture}
\caption{Proof idea. The curves are graphs of $R(\hat{F}_r,r)$. The graph of the worst-case revenue function $\underline{R}(r)$ must lie everywhere weakly below the depicted curve by Step 2 and must pass through point $A$ by Step 1. $n=3$; $\vmax=1$, $m=1/2$. In the left picture, $c=0.4>1/3=\vmin^*$; $c$ must be the unique maxmin price. On the right, $c=0.2<1/3=\vmin^*$; $c$ may not be uniquely optimal.}\label{fig:proofidea1}
\end{figure}

In this step, for every $r$ we construct a feasible distribution $\hat{F}_r$ such that $R(\hat{F}_r,r)\leq \underline{R}(c)$ for all $r$. This implies the result of theorem \ref{main}. The distributions $\hat{F}_r$ are not necessarily worst-case given a reserve $r$. They may be thought of as \emph{threat distributions}: distributions that Nature threatens to use were the seller to deviate from $r=c$.  

The construction depends on whether $c>\vmin^*$ or $c\leq \vmin^*$.

\vspace{1em}
\textbf{Case 1.} $c>\vmin^*$. The construction of $\hat{F}_r$ is separate for $r\in[0,\vmin^*)$, $r\in[\vmin^*,c)$, $r\in[c,m)$, $r\geq m$. 
Define threat distributions $\hat{F}_r$ by 
\[\hat{F}_r:=
\begin{cases}
\delta_{\vmin^*,\vmax}, & r\in [0,\vmin^*);\\
\delta_{r^+,\vmax}, & r\in [\vmin^*,c);\\
\delta_{r,\vmax}, & r\in[c,m);\\
\delta_m, & r\geq m,
\end{cases}\]
where $r^+$ is a point arbitrarily close to $r$ to the right of it. (Formally, in this case we consider a sequence of distributions $\hat{F}^k_r$, each of those binary on $\{r+1/k,\vmax\}$.)

\begin{proposition}\label{fhatworks}
Suppose $c>\vmin^*$. Then, $R(\hat{F}_r,r)< \underline{R}(c)$ for all $r\neq c$ and $R(\hat{F}_r,r)=\underline{R}(c)$ for $r=c$.
\end{proposition}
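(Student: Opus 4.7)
My plan is to verify $R(\hat{F}_r,r)\le\underline{R}(c)$ piece by piece on the four intervals in the definition of $\hat{F}_r$, with equality only at $r=c$. Two observations simplify everything. First, for a binary distribution $\delta_{a,\vmax}$ with mean $m$, the probability $q(a):=(\vmax-m)/(\vmax-a)$ on $a$ satisfies the three useful identities $(\vmax-a)q(a)=\vmax-m$, $m-a=(\vmax-a)(1-q(a))$, and $q'(a)=q(a)/(\vmax-a)$. Second, every revenue expression will reduce to a single auxiliary function $h(q):=nq^{n-2}-(n-1)q^{n-1}$, whose derivative $h'(q)=q^{n-3}[n(n-2)-(n-1)^2 q]$ shows that $h$ is strictly decreasing on $[q^*_n,1]$, where $q^*_n=n(n-2)/(n-1)^2$ is the critical probability of Lemma~\ref{supplylemma}. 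Plugging the cdf of $\delta_{c,\vmax}$ into \eqref{revenue} collapses $\underline{R}(c)=R(\delta_{c,\vmax},c)$ to $\vmax-(\vmax-m)h(p)$ with $p:=q(c)$; the Case~1 hypothesis $c>\vmin^*$ is exactly $p>q^*_n$, so $h(p)<h(q^*_n)$.

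Three of the four ranges are then short. For $r\ge m$, the degenerate threat $\delta_m$ means no bidder clears the reserve, so revenue is just $c$; under the binary worst case the event $v_{(2)}=\vmax$ has positive probability and contributes $\vmax>c$, so $\underline{R}(c)>c$. For $r\in[0,\vmin^*)$, the constant threat $\delta_{\vmin^*,\vmax}$ places all of its mass strictly above $r$, and substituting its cdf into \eqref{revenue} yields $R(\hat{F}_r,r)=\vmax-(\vmax-m)h(q^*_n)<\underline{R}(c)$. For $r\in[\vmin^*,c)$ the limit threat $\delta_{r^+,\vmax}$ still places all mass strictly above $r$, and the same computation gives $R(\hat{F}_r,r)=\vmax-(\vmax-m)h(q(r))$ with $q(r)\in[q^*_n,p)$; strict monotonicity of $h$ on $[q^*_n,1]$ again yields the bound.

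The principal step is the range $r\in[c,m)$ with $\hat{F}_r=\delta_{r,\vmax}$, where I must show $g(r):=R(\delta_{r,\vmax},r)$ satisfies $g(r)<g(c)=\underline{R}(c)$ for $r\in(c,m)$. Plugging the cdf into \eqref{revenue} and differentiating, repeated use of the three identities from the first paragraph will collapse the derivative to
\[
g'(r)=-\frac{n\,q(r)^{n-1}}{\vmax-r}\bigl[(n-2)(m-r)+q(r)(r-c)\bigr],
\]
whose bracket is a sum of nonnegative terms on $[c,m)$ that is strictly positive throughout $(c,m)$. Hence $g'<0$ on $(c,m)$, giving the desired strict inequality, and combining all four cases completes the proof. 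I expect this derivative bookkeeping to be the main obstacle: carried out directly the expressions are unwieldy, and the clean two-term bracket emerges only after two successive uses of $q'=q/(\vmax-r)$ together with $(\vmax-r)(1-q)=m-r$ to cancel a common factor and leave a single power of $q$.
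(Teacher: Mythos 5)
Your proposal is correct and follows essentially the same route as the paper: the same four-interval decomposition, the same threat distributions, and your derivative formula for $r\in[c,m)$ is an algebraic restatement of the paper's $n(n-2)pz(p)-np^{n-1}p'(r-c)$ (one checks the two agree using $p'(\vmax-r)=p$ and $1-p=(m-r)/(\vmax-r)$). The one genuine simplification is on $[0,c)$: the paper establishes that $R(\hat F_r,r)$ is increasing on $[\vmin^*,c)$ via a second derivative computation, whereas you collapse both $[0,\vmin^*)$ and $[\vmin^*,c)$ to the closed-form value $\vmax-(\vmax-m)h(q(\cdot))$ and invoke monotonicity of $h$ on $[q^*_n,1]$, which is a cleaner way to reach the same bound.
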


\begin{proof}
The fact that $R(\hat{F}_r,r)=\underline{R}(c)$ for $r=c$ is obvious since $\hat{F}_c$ is a worst-case distribution for $r=c$, as identified by proposition \ref{wcmean}.

Consider $r\in[c,m)$. Denote by $p(r)=\frac{\vmax-m}{\vmax-r}$ the probability assigned by $\hat{F}_r$ to $r$. Then, by \eqref{revenue},
\[R(\hat{F}_r,r)=r-(r-c)p^n(r)+(\vmax-r)(1-np^{n-1}(r)+(n-1)p^n(r)).\]
Thus, the full derivative of revenue is
\[\frac{dR(\hat{F}_r,r)}{dr}=np^{n-1}-np^n-np^{n-1}p'+(\vmax-r)n(n-1)z(p)p'.\]
Since $p'(\vmax-r)=p$, this simplifies to
\[n(n-2)pz(p)-np^{n-1}p'\cdot(r-c).\]
Because $p'>0$ ,$\frac{dR(\hat{F}_r,r)}{dr}<0$ except when $r=c$ and $n=2$. Thus, $R(\hat{F}_r,r)$ is strictly decreasing on $[c,m)$, which implies the result.

For $r\geq m$, $R(\hat{F}_r,r)=c$, but $c=\lim\limits_{r\to m^-}  R(\hat{F}_r,r)$. Thus, $c<\underline{R}(c)$. 

Consider now $r\in[\vmin^*,c)$. Denote by $\delta_{r,\vmax}$ the binary distribution on $\{r,\vmax\}$. $R(\delta_{r^+,\vmax},r)$ differs from $R(\delta_{r,\vmax},r)$ only by term $(r-c)p^n$, because  there is a sale if $v_{(1)}=r^+$ but not if $v_{(1)}=r$. Thus, when $r\in[\vmin^*,c)$, we get
\[\frac{dR(\hat{F}_r,r)}{dr}=n(n-2)pz(p)-np^{n-1}p'\cdot(r-c)+[(r-c)p^n]'=p^{n-1}((n-1)^2p-n(n-2)).\]
For $r> \vmin^*$, $p(r)> q^*_n$, and for $r=\vmin^*$, $p(r)=q^*_n$. Thus, the above derivative is zero at $r=\vmin^*$ and positive at $r\in(\vmin^*,c)$. Thus, $R(\hat{F}_r,r)$ is strictly increasing on $[\vmin^*,c)$, which implies the result.

Finally, because $R(\hat{F}_r,r)<\underline{R}(c)$ for $r=\vmin^*$, this is true for $r<\vmin^*$ as well. \eop
\end{proof}

\vspace{1em}
\textbf{Case 2.} $c\leq \vmin^*$. Define threat distributions $\hat{F}_r$ by
\[\hat{F}_r:=
\begin{cases}
\delta_{\vmin^*,\vmax}, & r\in [0,\vmin^*);\\
\delta_{r,\vmax}, & r\in[\vmin^*,m);\\
\delta_m, & r\geq m,
\end{cases}\]

\begin{proposition}\label{fhatworks2}
Suppose $c\leq\vmin^*$. Then, $R(\hat{F}_r,r)\leq \underline{R}(c)$ for all $r\neq c$ and $R(\hat{F}_r,r)=\underline{R}(c)$ for $r=c$.
\end{proposition}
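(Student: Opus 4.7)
My plan for Proposition \ref{fhatworks2} parallels the piecewise argument from Proposition \ref{fhatworks}, simplified by the fact that $c \leq \vmin^*$ collapses the $[\vmin^*, c)$ piece from Case 1 to empty and leaves $\hat{F}_r$ constant (equal to $\delta_{\vmin^*, \vmax}$) on the whole interval $[0, \vmin^*]$. The relevant partition of $r$ becomes $[0, \vmin^*)$, $\{\vmin^*\}$, $(\vmin^*, m)$, $[m, \infty)$. The equality $R(\hat{F}_c, c) = \underline{R}(c)$ is immediate from Proposition \ref{wcmean}: in both sub-cases $c < \vmin^*$ (first piece of the definition) and $c = \vmin^*$ (second piece), one reads off $\hat{F}_c = \delta_{\vmin^*, \vmax}$, which is the worst-case distribution.

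The two easy regions are $[0, \vmin^*)$ and $[m, \infty)$. On $[0, \vmin^*)$ the threat $\delta_{\vmin^*, \vmax}$ puts no mass in $[0, r]$, so $F^n(r) = 0$ and (\ref{revenue}) reduces $R(\hat{F}_r, r)$ to a quantity $R_0$ independent of $r$; evaluating at $r = c$ shows $R_0 = \underline{R}(c)$, so equality holds across this whole interval. On $[m, \infty)$ the threat $\delta_m$ induces no sale, so $R(\hat{F}_r, r) = c$, and the inequality follows from the easy bound $\underline{R}(c) > c$, which holds because under $\delta_{\vmin^*, \vmax}$ a sale occurs at a price weakly above $\vmin^* \geq c$ and with positive probability $1-q^n$ at $\vmax > c$.

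The substantive step is the interval $(\vmin^*, m)$, where $\hat{F}_r = \delta_{r, \vmax}$. I reuse the derivative identity derived in the proof of Proposition \ref{fhatworks}, which is valid here since $(\vmin^*, m) \subseteq (c, m)$:
\[
\frac{dR(\hat{F}_r, r)}{dr} \;=\; n(n-2)\, p(r)\, z(p(r)) \;-\; n\, p^{n-1}(r)\, p'(r)\, (r - c),
\]
with $p(r) = (\vmax - m)/(\vmax - r) > q^*_n$ for $r > \vmin^*$, so $z(p) = p^{n-2}(p-1) < 0$. The first term is non-positive (strictly negative for $n \geq 3$); the second is strictly negative because $r > c$ and $p' > 0$. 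Thus $R(\hat{F}_r, r)$ is strictly decreasing on $(\vmin^*, m)$, and combined with continuity in $r$ at the endpoint $\vmin^*$ this yields $R(\hat{F}_r, r) < R(\delta_{\vmin^*, \vmax}, \vmin^*) \leq R_0 = \underline{R}(c)$ throughout the interval, the last inequality following from (\ref{revenue}) with a deficit of exactly $(\vmin^* - c)q^n$ when $c < \vmin^*$. The only delicate point, and the main obstacle such as it is, is the bookkeeping of this boundary jump at $r = \vmin^*$: the no-sale event acquires probability $q^n$ and its contribution drops from $\vmin^*$ to $c$. This jump is in the favorable direction, but is exactly why Proposition \ref{fhatworks2} states a weak rather than strict inequality --- every $r \in [0, \vmin^*)$ attains $R_0 = \underline{R}(c)$, which is the source of the non-uniqueness of the maxmin reserve in Case 2.
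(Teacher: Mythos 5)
Your proof is correct and follows essentially the same route as the paper: equality on $[0,\vmin^*)$ because the reserve is slack, strict monotone decrease on $[\vmin^*, m)$ via the derivative identity reused from Proposition~\ref{fhatworks}, the boundary deficit $(\vmin^*-c)q^n$ at $r=\vmin^*$, and the trivial bound for $r\geq m$. One small slip worth flagging: under $\delta_{\vmin^*,\vmax}$ the probability that $v_{(2)}=\vmax$ is $1-q^n-n(1-q)q^{n-1}$, not $1-q^n$ (the latter only excludes the all-$\vmin^*$ event, where the second-highest value can still be $\vmin^*$), but since that probability is still strictly positive your conclusion $\underline{R}(c)>c$ is unaffected.
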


\begin{proof}
For $r\in[0,\vmin^*)$, $R(\hat{F}_r,r)=\underline{R}(c)$, because $\hat{F}_r=F^*$ and the reserve price does not affect the auction. For $r\in[\vmin^*,m)$, by the same reasoning as above, $R(\hat{F}_r,r)$ is strictly decreasing. Finally, $R(\hat{F}_{\vmin^*},\vmin^*)\leq \underline{R}(c)$ because $R(\hat{F}_{\vmin^*},\vmin^*)$ is the revenue under $F^*$ when the good is not sold when all values are equal to $\vmin^*$ while $\underline{R}(c)$ is the revenue under $F^*$ when the good is sold for $\vmin^*$ when all values are equal to $\vmin^*$. \eop
\end{proof}

Now, because by propositions \ref{fhatworks} and \ref{fhatworks2}, 
$R(\hat{F}_r,r)\leq \underline{R}(c)$ for all $r$ and because distributions $\hat{F}_r$ are feasible to Nature, we conclude that $\underline{R}(r)\leq R(\hat{F}_r,r)\leq \underline{R}(c)$ for all $r$, which finishes the proof of theorem \ref{main}.

\subsection{Uniqueness}\label{uniqueness1}
Theorem \ref{main} establishes that $c$ is a maxmin reserve price. But are there other maxmin prices?

\begin{proposition}\label{unique?}
Suppose the seller knows the mean of value distribution $m$ and an upper bound on values $\vmax$. Then:
\begin{enumerate}
\item If  $\vmin^*<c$ (equivalently, $(n-1)^2<\frac{\vmax-c}{m-c}$), $c$ is the unique maxmin reserve price;
\item If $\vmin^*\geq c$ (equivalently, $(n-1)^2\geq \frac{\vmax-c}{m-c}$), all prices $r\in[0,c]$ are maxmin reserve prices.  
\end{enumerate}
\end{proposition}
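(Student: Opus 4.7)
Part 1 is already essentially in hand: the hypothesis $\vmin^* < c$ is exactly Case~1 of the proof of Theorem~\ref{main}, and Proposition~\ref{fhatworks} delivers the \emph{strict} inequality $R(\hat{F}_r, r) < \underline{R}(c) = R^*$ for every $r \neq c$. Since $\hat{F}_r$ is feasible, $\underline{R}(r) \leq R(\hat{F}_r, r) < R^*$, so no $r \neq c$ can be maxmin. My write-up of Part~1 would amount to one sentence citing that proposition.

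For Part 2, assume $\vmin^* \geq c$, fix $r \in [0, c)$ (the case $r = c$ is Theorem~\ref{main}), and aim to show $\underline{R}(r) = R^*$. The upper bound $\underline{R}(r) \leq R^*$ is immediate from feasibility of $F^* = \delta_{\vmin^*, \vmax}$: because $r < c \leq \vmin^*$, the reserve never binds under $F^*$, so $R(F^*, r) = R(F^*, c) = R^*$.

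The substantive direction is $\underline{R}(r) \geq R^*$. I would rerun the Lagrangian argument of Section~3.2.1 at a general reserve $r$. Write $R(F, r) = r + (c-r) F(r)^n + \int_r^{\vmax} [1 - n F^{n-1} + (n-1) F^n]\,dv$ and form $\mathcal{L}(F, \lambda) = R(F, r) + \lambda\bigl(\int_0^{\vmax} (1-F)\,dv - m\bigr)$. Relative to the $r = c$ case, this Lagrangian carries a nonnegative extra boundary term $(c-r) F(r)^n$ and an extra integral $\lambda \int_0^r (1-F)\,dv$. Take $\lambda = \lambda^*$, which is \emph{negative} because $q^*_n < 1$ makes $z(q^*_n) < 0$. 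Then pointwise minimization on $[0, 1]$ gives: $F(v) = 0$ on $[0, r)$ (exactly where $\lambda^* < 0$ matters); $F(r) = 0$ from the boundary term (since $c > r$); and minimizers in $\{0, q^*_n\}$ on $(r, \vmax)$ by Lemma~\ref{supplylemma}. The candidate $F^*$ --- equal to $0$ on $[0, \vmin^*)$ and $q^*_n$ on $[\vmin^*, \vmax)$ --- is a monotone, right-continuous selection among these pointwise minimizers (this uses $r < \vmin^*$, which holds because $r < c \leq \vmin^*$) and satisfies the mean constraint by the same calculation as in the proof of Proposition~\ref{wcmean}. The sufficiency argument of Lemma~\ref{sufficiency0} adapts verbatim, so $F^*$ solves Nature's problem against $r$ and $\underline{R}(r) = R(F^*, r) = R^*$.

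I do not anticipate a real obstacle. The only ingredient not already present in Section~3.2.1 is the sign check $\lambda^* < 0$, which is precisely what makes $F \equiv 0$ pointwise-optimal on the enlarged integration region $[0, r)$; monotonicity and right-continuity of $F^*$ are automatic.
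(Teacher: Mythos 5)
Part 1 of your proposal coincides with the paper's proof: cite Proposition~\ref{fhatworks} and conclude via feasibility of $\hat F_r$. Part 2 is where your route genuinely diverges. The paper argues via a decomposition
\[R(F,r) = r - c + (c-r)F^n(r) + R(F,c)\]
and claims $F^*$ minimizes the two non-constant pieces term by term. However, subtracting the integral representations of $R(F,r)$ and $R(F,c)$ in \eqref{revenue} actually yields
\[R(F,r) - R(F,c) = r - c + (c-r)F^n(r) + \int_r^{c}\!\bigl(1 - nF^{n-1}(v) + (n-1)F^n(v)\bigr)dv,\]
so an integral term over $[r,c)$ is missing in the paper's identity; and that integrand is \emph{maximized}, not minimized, by $F^*=0$ on $[r,c)$, so the term-by-term argument does not go through as written even after restoring the missing piece. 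Your approach sidesteps this: by re-forming the Lagrangian at the general reserve $r$ over the full domain $[0,\vmax]$, taking $\lambda=\lambda^*<0$ (correct sign check, since $z(q^*_n)<0$), and observing that $F^*$ simultaneously (i) zeroes the boundary term $(c-r)F(r)^n$, (ii) pointwise-minimizes $\lambda^*(1-F)$ on $[0,r)$, and (iii) selects from $\{0,q^*_n\}$ on $[r,\vmax)$, you recover the optimality of $F^*$ against $r$ directly via the sufficiency argument of Lemma~\ref{sufficiency0} with the mean constraint written on $[0,\vmax]$. The only small caveat worth spelling out is that the boundary term is pinned by right-continuity from values just above $r$, which is consistent with the $F^*=0$ selection on $[r,\vmin^*)$ since $r<c\leq\vmin^*$. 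Your Lagrangian re-run is therefore not merely an alternative but the more robust argument here; it also transfers verbatim to the variance setting of Proposition~\ref{unique?2}.
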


\begin{proof}
Part 1 follows directly from proposition \ref{fhatworks}. To prove part 2, we show that $F^*$, the distribution identified by proposition \ref{wcmean}, part 2, is a worst-case distribution not only for $r=c$, but for all $r\in[0,c)$ as well. Indeed, if $r<c$, the expected revenue may be written as 
\[R(F,r)=r-c+(c-r)F^n(r)+R(F,c).\]
$F^*$ minimizes $R(F,c)$, but it also minimizes $(c-r)F^n(r)$ because $c-r>0$ and $F^*(r)=0$, as $r<c\leq \vmin^*$. Thus, it minimizes the sum of these two terms. \eop
\end{proof}
 
As discussed in the introduction, proposition \ref{unique?}, part 2, might weakly explain why sometimes reserve prices lower than seller's valuation are observed in real-life auctions. By using the qualifier ``weakly'' we emphasize the caveat that all prices below $c$ are weakly dominated by $c$ (yield weakly lower revenue for any \emph{fixed} distribution) and thus might be refined away despite being worst-case optimal.  

Note, however, that proposition \ref{unique?}, part 2, does not say that prices $r\in[0,c]$ are the only maxmin prices. Indeed, in the previous version of this paper \citep{suzdal2018}
we show that when the maxmin price is not unique, the set of maxmin prices may also include some prices higher than $c$. As noted above, the full characterization of the set of maxmin prices by backward induction requires subtler analysis that is beyond the scope of this paper. 

\section{Known mean and upper bound on variance}\label{knownmeanandvariance}
\subsection{The result}
In this section, we consider the problem \eqref{problem} for $\Delta=\Delta_2(m,\sigma^2)$, that is, consider a situation in which seller knows the mean and an upper bound for variance of value distribution. 

Again, it has been shown that if $n=1$, there exists a unique maxmin price that exceeds seller's costs  (see \cite{azar2013parametric} and \cite{carrasco2018optimal}). In contrast, we show that if $n\geq 2$, the seller can do no better than to set the reserve price to her own valuation. 

\begin{theorem}[\textbf{Main result II}]\label{main2}
Suppose the seller knows the mean of value distribution $m$ and an upper bound on its variance $\sigma^2$. Then, the set of prices $r^*$ solving problem \eqref{problem} includes the seller's valuation $c$.
\end{theorem}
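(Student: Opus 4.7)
I will mirror the two-step strategy used for Theorem \ref{main}. In step 1 I identify a worst-case distribution $F^*$ at $r=c$ by a Lagrangian argument; in step 2, for every $r\geq 0$ I construct a feasible threat distribution $\hat F_r\in\Delta_2(m,\sigma^2)$ with $R(\hat F_r,r)\leq R(F^*,c)$. Combined with feasibility of $\hat F_r$, this gives $\underline R(r)\leq R(F^*,c)=\underline R(c)$ for all $r$, so $r=c$ solves the maxmin problem.

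For step 1, I first prove the analogue of Lemma \ref{nomassbelowc}: any feasible $F$ can be replaced by a cdf $\tilde F$ supported in $[c,\infty)$ that has the same mean, weakly smaller second moment, and weakly lower revenue. The construction is the same as in Lemma \ref{nomassbelowc}; to verify that the variance inequality is preserved, decompose $\tilde V$ into an ``atom-at-$c$'' branch and a ``conditional on $V\geq c$'' branch and apply the law of total variance to show that each of the two underlying operations (collapsing mass below $c$ onto $c$; then enlarging the atom at $c$ at the expense of mass above) concentrates probability at a point $c<m$ and so weakly reduces the second moment. After this reduction I form the Lagrangian
\beq
\mathcal{L}(F,\lambda,\mu)=\int_c^{\infty}\left[1-nF^{n-1}(v)+(n-1)F^n(v)+(\lambda+2\mu v)(1-F(v))\right]dv,
\eeq
where $\lambda$ is attached to the mean and $\mu\geq 0$ to the variance constraint (with complementary slackness), using the identity $\mathbb{E}[V^2]=\int 2v(1-F(v))\,dv$. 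An analogue of Lemma \ref{sufficiency0} reduces the problem to pointwise minimization. The pointwise FOC $\lambda+2\mu v=n(n-1)F^{n-2}(1-F)$ now depends on $v$, so $F^*(v)$ varies with $v$. Since the right-hand side is non-convex in $F$, the ``supply curve'' reasoning of Lemma \ref{supplylemma} still applies pointwise: the minimizer is either $F=0$ (when $\lambda+2\mu v<\lambda^*$) or the upper branch $\bar y(\lambda+2\mu v)$ of the FOC, capped at $1$. Because $\lambda+2\mu v$ is increasing in $v$, $F^*$ has three pieces: an atom at a lower endpoint $a^*$ (where $\lambda+2\mu a^*=\lambda^*$), a strictly increasing continuous part on $[a^*,b^*]$, and an atom at an upper endpoint $b^*$ (where $\bar y$ first reaches $1$). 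The four unknowns $(\lambda,\mu,a^*,b^*)$ are pinned down jointly by the mean and variance equations together with the two threshold conditions.

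For step 2, for each $r$ I define $\hat F_r$ by a case split paralleling Propositions \ref{fhatworks}--\ref{fhatworks2} depending on whether $c<a^*$ or $c\geq a^*$. The building blocks are: $\hat F_r=F^*$ on any interval where $r$ is below the support of $F^*$, so that the reserve is irrelevant; a two-point distribution supported on $\{r,V(r)\}$ for $r\in[c,m)$, or on $\{r^+,V(r)\}$ for $r\in[a^*,c)$ via the same limiting construction as in the proof of Proposition \ref{fhatworks}, with $V(r):=m+\sigma^2/(m-r)$ being the unique upper point making the distribution lie in $\Delta_2(m,\sigma^2)$; and $\delta_m$ for $r\geq m$, giving revenue $c$. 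Differentiating $R(\hat F_r,r)$ in $r$ on each sub-interval, following the algebraic template of the proof of Proposition \ref{fhatworks}, yields the required monotonicity (decreasing on $[c,m)$, increasing on $[a^*,c)$); the new feature compared to the bounded-support case is an additional term coming from $V'(r)>0$, which a short calculation shows has the correct sign because the upper support shifts away from the reserve.

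The main obstacle is step 1. The two-parameter Lagrangian forces $F^*$ to have simultaneously an atom at a lower endpoint, a continuous density on an interval, and an atom at an upper endpoint, and $(\lambda,\mu,a^*,b^*)$ must be pinned down jointly from a non-trivial system for which closed-form expressions are unlikely, so the Lagrangian analysis must be carried out more abstractly than in Theorem \ref{main}. Verifying that the mass-shifting reduction preserves the variance inequality also needs care. Once $F^*$ is characterized, the construction of $\hat F_r$ and the monotonicity checks in step 2 are routine adaptations of the arguments already used for Theorem \ref{main}.
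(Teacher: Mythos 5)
Your step 2 has a genuine gap: two-point threat distributions $\delta_{r,V(r)}$ with $V(r)=m+\sigma^2/(m-r)$ are feasible but not bad enough. In the bounded-support setting, the binary threats $\delta_{r,\vmax}$ work because the worst-case distribution at $r=c$ is itself binary, so $R(\hat F_c,c)=\underline{R}(c)$ holds exactly. In the variance setting the worst-case distribution at $r=c$ has a continuous part (the paper's $G_c$), and for $n\geq 2$ a binary distribution with the same two moments yields strictly higher expected revenue than $G_c$. Concretely, with $n=2$, $m=\sigma^2=1$, $c=0$: the binary $\delta_{0,2}$ gives $R(\delta_{0,2},0)=1/2$, but $\underline{R}(0)=4/9$ (the worst case being an atom of mass $1/3$ at $0$ plus a uniform piece on $(0,3]$). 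Hence $R(\hat F_r,r)>\underline{R}(c)$ in a whole neighborhood of $r=c$, and the required chain $\underline{R}(r)\leq R(\hat F_r,r)\leq\underline{R}(c)$ fails. The paper replaces the binary family by $G_r$ (an atom at $r$ plus a continuous density coming from the pointwise FOC of the Lagrangian) precisely so that $\hat F_c$ attains $\underline{R}(c)$; the monotonicity of $R(G_r,r)$ in $r$ then requires the algebra of Lemmas \ref{rexpr} and \ref{decreasing}, not merely the template of Proposition \ref{fhatworks}.

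There are also slips in your description of $F^*$ in step 1: the cdf has no atom at the upper endpoint $b^*$ (it rises continuously to $1$ at $v=-\lambda/(2\mu)$, where $\bar y$ first reaches $1$); and in the high-variance case $c>\vmin^{**}$ the lower endpoint of the support is $c$ itself with an atom of size $q(c)>q^*_n$, not a point $a^*<c$ satisfying $\lambda+2\mu a^*=\lambda^*$, which would place mass below $c$. Given the lower endpoint there are three free parameters, not four, and which equation pins down the atom size changes with the case. These errors in the shape of $F^*$ are repairable; the choice of threat family in step 2 is the substantive problem.
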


\subsection{The proof}
The plan of proof is exactly the same as in the section \ref{knownmean}. 

\subsubsection{First step}
Suppose $r=c$. As compared with \eqref{controlproblem}, Nature's problem now involves one more constraint. To write it in an integral form, note that, for a nonnegative random variable $v$ with cdf $F(\cdot)$,  $\mathbb{E}(v^2)=\int_0^{\infty} (1-F(\sqrt{s}))ds=\int_0^{\infty} 2v(1-F(v))dv.$ Hence, the additional constraint is $\int_0^{\infty} 2v(1-F(v))dv\leq m^2+\sigma^2$.

Thus, the new Nature's problem is:
\begin{align}
& \min\limits_{F(\cdot)}\left(c+\int_c^{\infty}\(1-nF^{n-1}(v)+(n-1)F^n(v)\right)dv\right)\label{controlproblem2}\\
\mbox{s.t. } & \int_0^{\infty}(1-F(v))dv=m \label{mean2}\\
& \int_0^{\infty}2v(1-F(v))dv\leq m^2+\sigma^2\label{var}\\
& F(v)\in [0,1] \mbox{ for all } v\in[0,\infty) \label{bounds2}\\
& F(v) \mbox{ is nondecreasing, right-continuous}\label{incr2} \\
& \lim\limits_{v\to \infty}F(v)=1 \label{cont2} 
\end{align}
Together, constraints \eqref{mean2} and \eqref{var} ensure that the mean of $F(\cdot)$ is equal to $m$, and its variance is no more than $\sigma^2$. The constraints \eqref{incr2}, \eqref{cont2} ensure that $F(\cdot)$ is a cdf.

Again, before proceeding to a Lagrangian, we show that Nature can restrict itself to distributions putting no mass below $c$. 

\begin{lemma}\label{nomassbelowc2}
For every feasible cdf $F$ in problem \eqref{controlproblem2}-\eqref{cont2}, there exists a feasible cdf $\tilde{F}$ putting no mass below $c$ such that the expected revenue \eqref{controlproblem2} is weakly lower under $\tilde{F}$ than under $F$.
\end{lemma}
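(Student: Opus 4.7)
The plan is to reuse the same transformation $\tilde F$ constructed in the proof of Lemma \ref{nomassbelowc}: given a feasible $F$, define $\beta := (m-c)/\int_c^{\infty}(1-F(v))\,dv \in (0,1]$ (the upper bound follows from \eqref{mean2} since $\int_0^c(1-F(v))\,dv \leq c$), and set $\tilde F(v) = 0$ for $v < c$ and $\tilde F(v) = \beta F(v) + (1-\beta)$ for $v \geq c$. The argument of Lemma \ref{nomassbelowc} immediately gives three of the four things needed: $\tilde F$ puts no mass below $c$, satisfies the mean constraint \eqref{mean2}, and yields weakly lower revenue than $F$ (since $\tilde F(v) \geq F(v)$ for $v \geq c$ and the integrand in \eqref{controlproblem2} is decreasing in $F$). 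The only genuinely new step is verifying that $\tilde F$ satisfies the variance constraint \eqref{var}.

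To handle this, I would use the representation $\mathbb{E}(v^2) = \int_0^{\infty} 2v(1 - F(v))\,dv$ and split the integral at $c$. Write $I_1 := \int_0^c 2v(1 - F(v))\,dv$ and $I_2 := \int_c^{\infty} 2v(1 - F(v))\,dv$, so feasibility of $F$ gives $I_1 + I_2 \leq m^2 + \sigma^2$. Since $1 - \tilde F \equiv 1$ on $[0, c)$ and $1 - \tilde F(v) = \beta(1 - F(v))$ on $[c, \infty)$, a direct calculation yields $\mathbb{E}_{\tilde F}(v^2) = c^2 + \beta I_2$. It therefore suffices to prove the sharper inequality $c^2 + \beta I_2 \leq I_1 + I_2$, or equivalently $c^2 - I_1 \leq (1 - \beta) I_2$.

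The key step is to verify this inequality by exploiting the location of the two integration ranges. Setting $\alpha := \int_c^{\infty}(1 - F(v))\,dv$, so that $1 - \beta = (\alpha - m + c)/\alpha$, the bound $v \leq c$ on $[0,c]$ gives $c^2 - I_1 = \int_0^c 2v F(v)\,dv \leq 2c \int_0^c F(v)\,dv = 2c(c - m + \alpha)$, while the bound $v \geq c$ on $[c, \infty)$ gives $(1 - \beta) I_2 \geq 2c(1-\beta)\alpha = 2c(\alpha - m + c)$. These two estimates match, which closes the argument.

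The main (and essentially only) obstacle is this variance verification: the transformation simultaneously shifts mass rightward from $[0, c)$ to $c$ (which would tend to raise variance) and pulls a $(1 - \beta)$-fraction of the mass on $(c, \infty)$ inward to $c$ (which lowers variance). It is not a priori obvious that these effects cancel cleanly; the content of the argument is that the particular $\beta$ chosen to preserve the mean also makes the upper bound on $c^2 - I_1$ and the lower bound on $(1-\beta)I_2$ coincide, via the common factor $2c(\alpha - m + c)$, so that no case analysis is required.
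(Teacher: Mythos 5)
Your construction of $\tilde F$ is the same as the paper's (it reuses Lemma~\ref{nomassbelowc}), and you correctly identify that the only new content is checking the variance constraint~\eqref{var}. Your verification is correct: $c^2 - I_1 = \int_0^c 2vF(v)\,dv \le 2c\int_0^c F(v)\,dv = 2c(c-m+\alpha)$, and $(1-\beta)I_2 \ge 2c(1-\beta)\alpha = 2c(\alpha-m+c)$, so the two bounds coincide and $c^2+\beta I_2 \le I_1+I_2 \le m^2+\sigma^2$. But the route is different from the paper's: the paper observes that $F$ is a mean-preserving spread of $\tilde F$ (i.e., $\int_0^v F(u)\,du \ge \int_0^v \tilde F(u)\,du$ for all $v$, with equal means), and concludes immediately that $\mathbb{E}_F(v^2)\ge \mathbb{E}_{\tilde F}(v^2)$ because $v\mapsto v^2$ is convex. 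The paper's argument is one line once the SOSD inequality is noted and would carry over to any convex moment constraint; your argument is a direct second-moment computation that avoids invoking second-order stochastic dominance and instead exploits the bound $v\le c$ on $[0,c]$ and $v\ge c$ on $[c,\infty)$. Both are fine; yours is more elementary and self-contained, while the paper's generalizes more cleanly.
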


\pfof{lemma \ref{nomassbelowc2}}{
The proof is the same as the proof of lemma \ref{nomassbelowc}. The only difference is that one has to show that $\tilde{F}$, as constructed in the proof of lemma \ref{nomassbelowc}, satisfies the variance constraint \eqref{var}. But this is true because $F$ is a mean-preserving spread of $\tilde{F}$, as $\int_0^v F(v)dv\geq  \int_0^v \tilde{F}(v)dv $ and their means are the same.
\eop} 

Lemma \ref{nomassbelowc2} allows to rewrite mean and variance constraints as \eqref{meannew}
and
\beq\label{varnew}
\int_c^{\infty}2v(1-F(v))dv\leq m^2+\sigma^2-c^2.
\eeq

Define the Lagrangian by 
\beq\label{lagrangian2}
\mathcal{L}(F,\lambda_1,\lambda_2)=\int_c^{\infty}\left(1+\lambda_1+2t\lambda_2-n F^{n-1}(v)+(n-1)F^n(v)-(\lambda_1+2\lambda_2 v) F(v)\right)dv
\eeq

The sufficiency of the pointwise minimization of the Lagrangian is now slightly subtler as now we have an inequality constraint. It can be ensured if $\lambda_2$ has the right sign and a candidate worst-case distribution $F_0$ satisfies \eqref{varnew} with equality.

\begin{lemma}\label{sufficiency}
If $F_0$ is any cdf such that (1) $F_0$ minimizes the Lagrangian among all cdfs for some $\lambda_2\geq 0$, $\lambda_1$ of any sign; (2) $F_0$ satisfies \eqref{meannew} and satisfies \eqref{varnew} with equality, then $F_0$ solves the problem  \eqref{controlproblem2}-\eqref{cont2}. 
\end{lemma}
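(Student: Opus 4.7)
The approach is the standard Lagrangian weak-duality sufficiency argument, adapted to a mixed equality/inequality system. The plan is to take an arbitrary feasible cdf $\tilde F$ and bound $R(\tilde F,c)$ below by $R(F_0,c)$ by exploiting (i) the pointwise minimization of $\mathcal L$ by $F_0$, (ii) the sign $\lambda_2\ge 0$ matched with the direction of the variance inequality \eqref{varnew}, and (iii) the fact that $F_0$ binds \eqref{varnew} with equality.

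\noindent\textbf{Step 1: reduce to distributions with no mass below $c$.} By Lemma \ref{nomassbelowc2}, it suffices to compare $F_0$ with feasible cdfs $\tilde F$ satisfying $\tilde F(v)=0$ for $v<c$, and therefore satisfying the rewritten constraints \eqref{meannew} and \eqref{varnew}. The integration on $[c,\infty)$ in the Lagrangian \eqref{lagrangian2} then matches both the revenue objective \eqref{controlproblem2} and the integrals in these constraints.

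\noindent\textbf{Step 2: unfold the Lagrangian inequality.} Since $F_0$ minimizes $\mathcal L(\cdot,\lambda_1,\lambda_2)$ among cdfs, we have $\mathcal L(F_0,\lambda_1,\lambda_2)\le \mathcal L(\tilde F,\lambda_1,\lambda_2)$. Writing the integrand of \eqref{lagrangian2} as the revenue integrand plus $\lambda_1(1-F(v))+2\lambda_2 v(1-F(v))$, this inequality becomes
\begin{equation*}
R(F_0,c)+\lambda_1\!\!\int_c^{\infty}\!\!(1-F_0)\,dv+2\lambda_2\!\!\int_c^{\infty}\!\!v(1-F_0)\,dv\;\le\; R(\tilde F,c)+\lambda_1\!\!\int_c^{\infty}\!\!(1-\tilde F)\,dv+2\lambda_2\!\!\int_c^{\infty}\!\!v(1-\tilde F)\,dv.
\end{equation*}
Both $F_0$ and $\tilde F$ satisfy \eqref{meannew} with equality, so the $\lambda_1$ terms cancel regardless of the sign of $\lambda_1$. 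For the variance terms, $F_0$ binds \eqref{varnew} with equality while $\tilde F$ satisfies it as an inequality, which gives
\begin{equation*}
2\lambda_2\!\!\int_c^{\infty}\!\!v(1-F_0)\,dv=\lambda_2(m^2+\sigma^2-c^2)\;\ge\;2\lambda_2\!\!\int_c^{\infty}\!\!v(1-\tilde F)\,dv,
\end{equation*}
where the inequality uses $\lambda_2\ge 0$. Substituting this into the previous display yields $R(F_0,c)\le R(\tilde F,c)$, which is the claim.

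\noindent\textbf{Main obstacle.} No serious difficulty is anticipated: the argument is mechanical once the correct signs are aligned. The one point demanding care is exactly the compatibility between the sign restriction $\lambda_2\ge 0$ and the direction of \eqref{varnew}, and the reliance on Lemma \ref{nomassbelowc2} to ensure that every competitor $\tilde F$ has its mass on $[c,\infty)$ so that the integrals appearing in $\mathcal L$, \eqref{meannew}, and \eqref{varnew} refer to the same domain; both are addressed above. The sign of $\lambda_1$ is unrestricted because the mean condition is an equality.
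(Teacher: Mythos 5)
Your proof is correct and follows essentially the same route as the paper's: it is the standard weak-duality sufficiency argument, with the $\lambda_1$ terms canceling by the mean equality and the $\lambda_2$ terms working out because $\lambda_2\ge 0$, $F_0$ binds \eqref{varnew}, and $\tilde F$ merely satisfies it. The only stylistic difference is that you explicitly invoke Lemma \ref{nomassbelowc2} to reduce to competitors supported on $[c,\infty)$, a step the paper leaves implicit by directly taking $\tilde F$ to satisfy \eqref{meannew}--\eqref{varnew}.
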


\begin{proof}
Take any cdf $\tilde{F}$ satisfying constraints \eqref{meannew}-\eqref{varnew}. We shall prove that $R(F_0,c)\leq R(\tilde{F},c)$ for any $F_0$ satisfying conditions in the lemma. Because $F_0$ minimizes the Lagrangian, 
\[\mathcal{L}(F_0,\lambda_1,\lambda_2)\leq \mathcal{L}(\tilde{F},\lambda_1,\lambda_2).\]
Because both $F_0$ and $\tilde{F}$ satisfy \eqref{meannew}, 
\[-\lambda_1\int_c^{\infty}(1-F_0(v))dv=-\lambda_1\int_c^{\infty}(1-\tilde{F}(v))dv.\]
Because $F_0$ satisfies \eqref{varnew} with equality, $\tilde{F}$ satisfies \eqref{varnew}, and $\lambda_2\geq 0$, 
\[-\lambda_2\int_c^{\infty}2v(1-F_0(v))dv=-\lambda_2(m^2+\sigma^2-c^2)\leq-\lambda_2\int_c^{\infty}2v(1-\tilde{F}(v))dv.\]
Summing up the above relations, one gets
\[
\int_c^{\infty}\left(1-n F_0^{n-1}(v)+(n-1)F_0^n(v)\right)dv\leq 
\int_c^{\infty}\left(1-n \tilde{F}^{n-1}(v)+(n-1)\tilde{F}^n(v)\right)dv,\]
or $R(F_0,c)\leq R(\tilde{F},c)$.
\eop
\end{proof}

Recall that $z(y)=y^{n-1}-y^{n-2}$ and define 
\beq\label{phidef}
\phi(q):=\frac{\int_q^1(z(y)-z(q))^2dy}{\left(\int_q^1(z(y)-z(q))dy\right)^2}
\eeq
for $q\in (0,1)$. 

Recall from section \ref{knownmean} that $q^*_n=1-\frac{1}{(n-1)^2}$. Now define 
\beq
\vmin^{**}:=\max\left\{ m-\frac{\sigma}{\sqrt{\phi(q^*_n)-1}},0\right\}.
\eeq

Analogously to section \ref{knownmean}, $\vmin^{**}$ will be shown to be the lowest point in support of the worst-case distribution if $c=r=0$. 

We now introduce the family of distributions that plays a major role in both steps of the proof of theorem \ref{main2}. The shape of the distribution is dictated by the minimization of Lagrangian when $r=c$. Given a parameter $\rho\in[\vmin^{**},m)$ define a cdf $G_{\rho}(\cdot)$ as follows: $G_{\rho}(v)=0$ for $v<\rho$; $G_{\rho}(\rho)\equiv q(\rho)$ and
\beq\label{hatFdef}
G^{-1}_{\rho}(q)=\frac{n(n-1)z(q)-\lambda_1(\rho)}{2\lambda_2(\rho)}
\eeq
for $q\in[q(\rho),1]$ where $\lambda_1(\rho)$, $\lambda_2(\rho)$ are parameters attuned in such a way that mean and variance constraints hold as equalities. 

Equivalently, for every $v\geq \rho$, $G_{\rho}(v)=\min\{\bar{y}(\lambda_1(\rho)+2\lambda_2(\rho)v),1\}$ where $\bar{y}(\lambda)$ is the larger solution to \eqref{foc} (as in \eqref{supply}). 
For $n=2$, $G_{\rho}$ is linear for $v\geq \rho$, so the continuous part of the distribution is uniform.
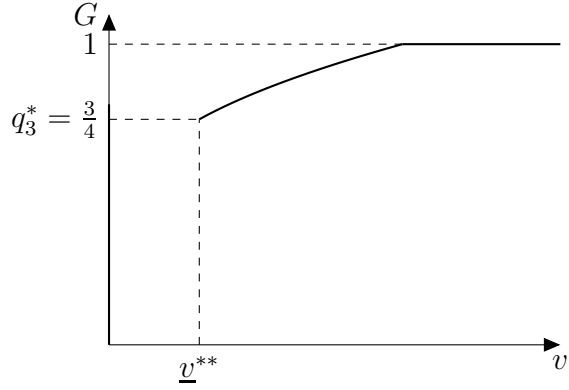
\begin{figure}[h!]
\centering
\begin{tikzpicture}[>=triangle 45, xscale=1.5,yscale=4]
\draw[->] (0,0) -- (4,0) node[below] {$v$};
\draw[->] (0,0) -- (0,1.1) node[left] {${G}$};

\draw[thick, domain=0.8:2.6] 
plot (\x, {1/2+0.1921940017*sqrt(2.819999999*\x-0.563999999)});
\draw[dashed](0.8,0)--(0.8,0.75);
\draw[thick](0,0)--(0,0.8);
\draw[thick] (2.6,1)--(4,1);
\draw[dashed] (0,1)--(2.6,1);
\draw[dashed] (0,0.75)--(0.8,0.75);
\node[left] at (0,1){1};
\node[left] at (0,0.75){$q^*_3=\frac{3}{4}$};
\node[below] at (0.8,0){$\vmin^{**}$};
\end{tikzpicture}

\caption{A graph of a typical cdf from the $G_{\rho}$ family for $n=3$. In the picture, $\rho=\vmin^{**}$, so the depicted cdf is worst-case if $r=c<\vmin^{**}$.}\label{fig:G}
\end{figure}

To proceed, one must first check that $G_{\rho}$ are well-defined. 

\begin{lemma}\label{welldefined}
$G_{\rho}$ is well-defined, i.e. for each $\rho\in[\vmin^{**},m)$, there exists a unique triple $(\lambda_1(\rho),\lambda_2(\rho),q(\rho))$, $\lambda_1(\rho)<0$, $\lambda_2(\rho)>0$, $q(\rho)\in[q^*_n,1)$ such that $G_{\rho}$ has mean $m$ and variance $\sigma^2$. 
\end{lemma}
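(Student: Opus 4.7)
My plan is to reduce the system of three unknowns $(\lambda_1,\lambda_2,q(\rho))$ determined by (i) the self-consistency condition $G^{-1}_\rho(q(\rho))=\rho$, (ii) the mean constraint, and (iii) the variance constraint, to a single scalar equation in $q$, and then to show this equation has a unique root with the desired monotonicity in $\rho$. Condition (i) immediately gives $\lambda_1 = n(n-1)z(q)-2\lambda_2\rho$, so substituting into \eqref{hatFdef} yields the clean parametric form $G^{-1}_\rho(q')=\rho+\alpha(z(q')-z(q))$ for $q'\in[q,1]$, where $\alpha:=n(n-1)/(2\lambda_2)$. Writing $A(q)=\int_q^1 (z(y)-z(q))\,dy$ and $B(q)=\int_q^1 (z(y)-z(q))^2\,dy$ so that $\phi(q)=B(q)/A(q)^2$, a direct computation gives
\[
\mathbb{E}_{G_\rho}[v]=\rho+\alpha A(q), \qquad \mathrm{Var}_{G_\rho}(v)=\alpha^2 B(q)-(m-\rho)^2.
\]
Setting the mean to $m$ forces $\alpha=(m-\rho)/A(q)$ (so $\lambda_2=n(n-1)A(q)/[2(m-\rho)]>0$), and then setting the variance to $\sigma^2$ collapses (iii) to the single equation
\beq\label{phieqn}
\phi(q)=1+\frac{\sigma^2}{(m-\rho)^2}.
\eeq

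The analytic work is then to establish three facts about $\phi$ on $[q^*_n,1)$. First, since $z'(y)=y^{n-3}[(n-1)y-(n-2)]$ vanishes at $(n-2)/(n-1)<q^*_n$, $z$ is strictly increasing on $[q^*_n,1]$, so $A(q)>0$ and $\phi$ is continuous. Second, $\phi$ is strictly increasing: using $A'(q)=-z'(q)(1-q)$ and $B'(q)=-2z'(q)A(q)$, one obtains
\[
\phi'(q)=\frac{2z'(q)\bigl[(1-q)B(q)-A(q)^2\bigr]}{A(q)^3},
\]
and the bracket is positive by Cauchy--Schwarz applied to $z(y)-z(q)$ on $[q,1]$ (strictly so, since $z$ is not constant on that interval). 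Third, $\phi(q)\to\infty$ as $q\to 1^-$: a Taylor expansion $z(y)-z(q)\sim z'(q)(y-q)$ yields $A(q)\sim z'(q)(1-q)^2/2$ and $B(q)\sim (z'(q))^2(1-q)^3/3$, so $\phi(q)\sim 4/[3(1-q)]$.

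Combining these with the definition $\vmin^{**}=\max\{m-\sigma/\sqrt{\phi(q^*_n)-1},\,0\}$ closes the argument: the right-hand side of \eqref{phieqn} is a strictly increasing function of $\rho$ on $[\vmin^{**},m)$ whose values range over $[\phi(q^*_n),\infty)$, so by continuity and strict monotonicity of $\phi$ there is a unique $q(\rho)\in[q^*_n,1)$ satisfying \eqref{phieqn}. Plugging back yields unique $\lambda_2(\rho)>0$ and $\lambda_1(\rho)=n(n-1)z(q(\rho))-2\lambda_2(\rho)\rho$, with $\lambda_1(\rho)<0$ because $z(q)\leq 0$ on $[q^*_n,1)$ with strict inequality for $q<1$ while $\lambda_2\rho\geq 0$; in the corner $\rho=0$, $\lambda_1=n(n-1)z(q)<0$ directly. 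I expect the main obstacle to be the monotonicity of $\phi$: it is not obvious a priori, but reduces cleanly once one computes $A'$ and $B'$ and applies Cauchy--Schwarz.
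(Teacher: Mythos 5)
Your proof is correct and takes essentially the same route as the paper: you reduce to the single scalar equation $\phi(q)=1+\sigma^2/(m-\rho)^2$, prove $\phi'>0$ on $[q^*_n,1)$ via positivity of the term $(1-q)B(q)-A(q)^2$ (your Cauchy--Schwarz argument is exactly the paper's ``variance of $z(Y)-z(q)$ with $Y$ uniform on $[q,1]$'' argument), and show $\phi(q)\to\infty$ as $q\to 1^-$. The only cosmetic differences are your explicit $A,B,\alpha$ bookkeeping and your quantitative Taylor estimate $\phi(q)\sim 4/[3(1-q)]$ where the paper instead factors $\phi=\psi/(1-q)$ with $\psi>1$.
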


The proofs of lemma \ref{welldefined} and subsequent lemmata is relegated to the \hyperref[app]{Appendix}. The analysis is enabled by the fact that one can write the mean and variance constraints \eqref{mean2},\eqref{var} as closed-form functions of $\lambda_1,\lambda_2$ and $q$, even though there is generally no closed-form solution for $G_{\rho}$. This is possible since the respective integrals may be rewritten as integrals of the quantile function $G^{-1}(q)$.

Note that if $\vmin^{**}>0$, $G_{\vmin^{**}}(\vmin^{**})\equiv q(\vmin^{**})=q^*_n$.
We now establish the worst-case distribution if $r=c$.

\begin{proposition}\label{wcdistr0}
Suppose $r=c$. Then, the distribution $G_{\max\{\vmin^{**},c\}}$ solves the problem \eqref{controlproblem2}-\eqref{cont2}. 
\end{proposition}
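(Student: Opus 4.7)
The proof will mirror that of Proposition~\ref{wcmean}. By Lemma~\ref{nomassbelowc2}, it is without loss of generality to restrict attention to cdfs putting no mass below $c$, so the objective, mean, and variance integrals may all be taken over $[c,\infty)$. The plan is to apply Lemma~\ref{sufficiency} to the candidate distribution $G_{\max\{\vmin^{**},c\}}$ with the multiplier pair $(\lambda_1,\lambda_2):=(\lambda_1(\max\{\vmin^{**},c\}),\lambda_2(\max\{\vmin^{**},c\}))$ supplied by Lemma~\ref{welldefined}. That lemma directly yields $\lambda_2>0$ and guarantees that the candidate has mean $m$ and variance exactly $\sigma^2$, so \eqref{meannew} holds and \eqref{varnew} holds with equality. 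All that remains is to verify pointwise minimization of the Lagrangian.

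The key observation is that, at each fixed $v\geq c$, the integrand of \eqref{lagrangian2} equals $(n-1)F^n-nF^{n-1}-\lambda F$ plus a term independent of $F$, where $\lambda:=\lambda_1+2\lambda_2 v$. This is exactly the expression whose $F$-minimization is solved in Section~\ref{knownmean}, so Lemma~\ref{supplylemma} applies pointwise with effective ``price'' $\lambda$. Since $\lambda_2>0$, the map $v\mapsto\lambda_1+2\lambda_2 v$ is strictly increasing, so the effective price crosses $\lambda^*$ at most once, and the pointwise argmin over $F\in[0,1]$ is $\{0\}$ when $\lambda<\lambda^*$, $\{0,q^*_n\}$ when $\lambda=\lambda^*$, and $\{\min(\bar{y}(\lambda),1)\}$ when $\lambda>\lambda^*$.

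I would then split into cases. In Case 1, $c>\vmin^{**}$ and the candidate is $G_c$; the inversion formula \eqref{hatFdef} at $q=q(c)$ yields $\lambda_1+2\lambda_2 c=n(n-1)z(q(c))$, and since $q(c)\in[q^*_n,1)$ with $z$ increasing on $[q^*_n,1]$ (a one-line check of $z'$), this is $\geq\lambda^*$. Hence $G_c(c)=q(c)=\bar{y}(\lambda_1+2\lambda_2 c)$ lies in the pointwise argmin at $v=c$; for $v>c$ the effective price strictly exceeds $\lambda^*$ and $G_c(v)=\min(\bar{y}(\lambda_1+2\lambda_2 v),1)$ by construction. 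In Case 2, $c\leq\vmin^{**}$ and the candidate is $G_{\vmin^{**}}$; since $q(\vmin^{**})=q^*_n$, the inversion gives $\lambda_1+2\lambda_2\vmin^{**}=\lambda^*$, so for $v\in[c,\vmin^{**})$ we have $\lambda<\lambda^*$ and $G_{\vmin^{**}}(v)=0$ matches the unique argmin, at $v=\vmin^{**}$ both members of $\{0,q^*_n\}$ are argmins and the value $q^*_n$ is selected, and for $v>\vmin^{**}$ we are in the strictly increasing portion as in Case 1. Lemma~\ref{sufficiency} then completes the proof. The main nontrivial step is really Lemma~\ref{welldefined} (delegated to the Appendix), which gives existence and signs of the multipliers and matching of both moment constraints simultaneously; granting it, the pointwise verification above is a bookkeeping extension of the Section~\ref{knownmean} argument, the only new wrinkle being that the Lemma~\ref{supplylemma} ``price'' now varies linearly in $v$ rather than being constant.
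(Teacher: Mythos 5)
Your proof is correct and follows essentially the same route as the paper's: you reduce to Lemma~\ref{sufficiency} via Lemma~\ref{welldefined}, identify the effective pointwise multiplier $\lambda_1+2\lambda_2 v$ and invoke Lemma~\ref{supplylemma}, and split into cases according to whether $c$ is above or below $\vmin^{**}$. The only difference is that you spell out the pointwise argmin verification in slightly more detail than the paper (which states it more tersely), but the structure and key ingredients are identical.
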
 

\pfof{proposition \ref{wcdistr0}}{
By lemma \ref{sufficiency}, it suffices to prove that $G_{\max\{\vmin^{**},c\}}$ minimizes the Lagrangian pointwise. 

Suppose first that $c<\vmin^{**}$ so $\vmin^{**}>0$. Then $G_{\vmin^{**}}(\vmin^{**})=q^*_n$. Take $\lambda_1$ and $\lambda_2$ as coming from the definition of $G_{\vmin^{**}}$ (numbers that make $G_{\vmin^{**}}$ satisfy the mean and variance constraints as equalities). We have $\lambda_1+2\lambda_2\vmin^{**}=n(n-1)z(q^*_n)$. Then it follows from lemma \ref{supplylemma} (with $\lambda_1+2\lambda_2v$ playing the role of $\lambda$) that $G_{\vmin^{**}}$ minimizes the Lagrangian pointwise under the multipliers $\lambda_1$ and $\lambda_2$.

Now suppose $c\geq \vmin^{**}$. Take $\lambda_1$ and $\lambda_2$ as coming from the definition of $G_c$. Then it again follows from lemma \ref{supplylemma} that $G_c$ minimizes the Lagrangian pointwise under the multipliers $\lambda_1$ and $\lambda_2$.\eop}

\subsubsection{Second step}
\begin{figure}[h!]
\centering
\begin{tikzpicture}[>=triangle 45, xscale=4,yscale=4]
\draw[->] (0,0) -- (1.2,0) node[below] {$r$};
\draw[->] (0,0) -- (0,1) node[left] {${R}$};

\draw[thick, domain=0:1] 
plot (\x, {(8/9)*(pow(1-\x,2)*(\x*\x+2)/pow(\x*\x-2*\x+2,2))});
\node[below] at (1,0){$m$};
\node[below] at (0,0){$c$};
\node[circle,fill=red,inner sep=1.5pt,minimum size=0pt] at (0,4/9) {};
\node[left] at (0,4/9){$A$};

\end{tikzpicture}
\begin{tikzpicture}[>=triangle 45, xscale=4,yscale=4]
\draw[->] (0,0) -- (1.2,0) node[below] {$r$};
\draw[->] (0,0) -- (0,1) node[left] {${R}$};

\draw[thick] (0,0.72901)--(0.53874,0.72901);
\draw[dashed] (0.53874,0.72901)--(0.53874,0) node[below] {$\vmin^{**}$};
\node[circle,fill=red,inner sep=1.5pt,minimum size=0pt] at (0,0.72901) {};
\node[left] at (0,0.72901){$A$};
\draw[thick]
(0.53874,0.50173)--
(0.55411,0.48109)--
(0.56949,0.46008)--
(0.58486,0.43874)--
(0.60024,0.4171)--
(0.61561,0.39521)--
(0.63099,0.37313)--
(0.64636,0.35091)--
(0.66174,0.32861)--
(0.67711,0.3063)--
(0.69249,0.28406)--
(0.70787,0.26198)--
(0.72324,0.24013)--
(0.73862,0.21861)--
(0.75399,0.19751)--
(0.76937,0.17693)--
(0.78474,0.15698)--
(0.80012,0.13776)--
(0.81549,0.11937)--
(0.83087,0.10192)--
(0.84624,0.085514)--
(0.86162,0.070263)--
(0.877,0.056265)--
(0.89237,0.043619)--
(0.90775,0.032419)--
(0.92312,0.022754)--
(0.9385,0.014703)--
(0.95387,0.0083425)--
(0.96925,0.0037363)--
(0.98462,0.00094031)--
(1,0);
\node[below] at (1,0){$m$};
\node[below] at (0,0){$c$};
\end{tikzpicture}
\caption{Proof idea. The curves are graphs of $R(\hat{F}_r,r)$. The graph of the worst-case revenue function $\underline{R}(r)$ must lie everywhere weakly below the depicted curve by Step 2 and must pass through the point $A$ by Step 1. $m=\sigma^2=1$; $c=0$; $n=2$ (left), $n=3$ (right). If $n=2$, $\sigma/m=1$ corresponds to ``high'' variance, and $r^*=0$ has to be the unique maxmin price. If $n=3$, $\sigma/m=1$ corresponds to ``low'' variance, and $r^*=0$ might not be the unqiue maxmin price.}\label{fig:proofidea}
\end{figure}
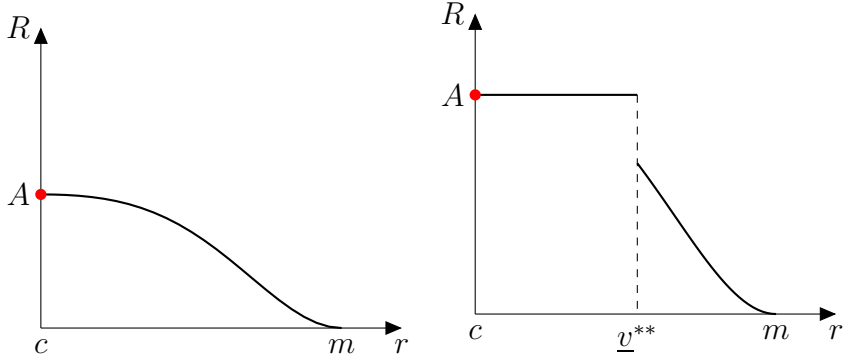

Analogously to section \ref{knownmean}, for every $r$ we construct a feasible distribution $\hat{F}_r$ such that $R(\hat{F}_r,r)\leq \underline{R}(c)$ for all $r$. This implies theorem \ref{main2}.

As in section \ref{knownmean}, the construction of threat distributions $\hat{F}_r$ depends on whether $c>\vmin^{**}$ or $c\leq \vmin^{**}$; equivalently, whether variance $\sigma^2$ is high or low.

\vspace{1em}
\textbf{Case 1 (High Variance).} $c>\vmin^{**}$. 
Define threat distributions $\hat{F}_r$ by 
\[\hat{F}_r:=
\begin{cases}
G_{\vmin^{**}}, & r\in [0,\vmin^{**});\\
G_{r^+}, & r\in [\vmin^{**},c);\\
G_r, & r\in[c,m);\\
\delta_m, & r\geq m,
\end{cases}\]
$G_{\vmin^{**}}$ is the same as the worst-case distribution $F^*$ when $r=c=0$.

Recall that $q(r)\equiv G_r(r)$, the size of the atom of $G_r$ at $r$, and $\lambda_1(r),\lambda_2(r)$ are parameters of $G_r$ (see \eqref{hatFdef}). In the next lemma, we derive closed-form expression for $R(\hat{F}_r,r)$ in terms of $\lambda_1(r)$, $\lambda_2(r), q(r)$ is available, even though there is no closed-form solution for $(\lambda_1(r),\lambda_2(r), q(r))$ themselves. 

\begin{lemma}\label{rexpr}
For $r\geq \vmin^{**}$, 
\beq\label{revenueexpr}
R(G_r,r)=|\lambda_1(r)|(m-q(r)r)-2\lambda_2(r)(m^2+\sigma^2-q(r)r^2)-nz(q(r))rq(r)+c\cdot q^n(r).
\eeq
\end{lemma}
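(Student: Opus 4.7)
My plan is to evaluate the integral form of the revenue on $G_r$ and to exploit the fact that the quantile definition \eqref{hatFdef} linearizes the relevant derivative in $v$. First I would note that $G_r$ has bounded support: the quantile formula gives $G_r^{-1}(1) = -\lambda_1(r)/(2\lambda_2(r)) =: \bar v(r)$ since $z(1)=0$. Define $g(q):=1 - nq^{n-1}+(n-1)q^n$; a direct computation yields $g'(q) = n(n-1)z(q)$, so \eqref{hatFdef} rearranges to the pointwise identity
\[
g'(G_r(v)) \;=\; \lambda_1(r) + 2\lambda_2(r)\, v \qquad \text{for every } v \in (r,\bar v(r)).
\]

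Using \eqref{revenue} together with $G_r(r) = q(r)$ and $g(0) = 1$, we have
\[
R(G_r, r) \;=\; r - (r-c)\,q(r)^n + \int_r^{\bar v(r)} g(G_r(v))\, dv.
\]
I would then integrate the remaining integral by parts with $u = g(G_r(v))$ and $dw = dv$. Since $g(1)=0$, the upper boundary contribution vanishes, the lower boundary contributes $-r\, g(q(r))$, and what is left equals, by the identity above,
\[
-\int_r^{\bar v(r)} v\bigl(\lambda_1(r) + 2\lambda_2(r)\, v\bigr)\, dG_r(v).
\]

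To finish, I would invoke the fact (from lemma \ref{welldefined}) that $G_r$ saturates both moment constraints as equalities. This pins down the contributions of the continuous part of $G_r$ to the first and second moments, namely $\int_r^{\bar v(r)} v\, dG_r(v) = m - q(r)r$ and $\int_r^{\bar v(r)} v^2\, dG_r(v) = m^2 + \sigma^2 - q(r) r^2$. Substituting and collecting yields
\[
R(G_r, r) = c\,q(r)^n + r\bigl(1 - q(r)^n - g(q(r))\bigr) - \lambda_1(r)\bigl(m - q(r)r\bigr) - 2\lambda_2(r)\bigl(m^2 + \sigma^2 - q(r)r^2\bigr).
\]
The elementary identity $1 - q^n - g(q) = nq^{n-1}(1-q) = -nq\, z(q)$ turns the second term into $-nz(q(r))\, r\, q(r)$, and since lemma \ref{welldefined} provides $\lambda_1(r) < 0$, one may replace $-\lambda_1(r)$ by $|\lambda_1(r)|$, recovering \eqref{revenueexpr}.

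I do not anticipate any serious obstacle. The slightly delicate points are the bookkeeping around the atom at $r$ and the need to check that the boundary term at $\bar v(r)$ vanishes, both of which are routine once one observes that $g(1)=0$ and that $G_r$ is continuous on $(r,\bar v(r)]$ with an atom only at $r$. The key structural input that makes the proof short is the linearization $g'(G_r(v)) = \lambda_1(r) + 2\lambda_2(r)v$ on the continuous part, which lets integration by parts collapse the revenue integral into a linear combination of the first two moments of $G_r$ — both of which the saturated constraints pin down explicitly.
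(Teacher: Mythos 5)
Your proof is correct and arrives at \eqref{revenueexpr} by a slightly different integral manipulation than the paper's, though both rest on the same structural input. You start from the CDF-integral form of the revenue \eqref{revenue}, integrate by parts in $v$-space (noting $g(1)=0$ kills the upper boundary and the lower boundary contributes $-r\,g(q(r))$), and then the pointwise linearization $g'(G_r(v)) = \lambda_1(r) + 2\lambda_2(r)v$ collapses the remaining integral into the first and second moments of the continuous part of $G_r$, which the saturated constraints pin down. The paper instead decomposes $R(G_r,r) = \mathbb{E}[v_{(2)}\mathbf{1}_{\{v_{(2)}>r\}}] - n z(q(r))\,rq(r) + c\,q^n(r)$ directly from the order-statistic description, writes the expectation using the density $n(n-1)(G_r^{n-2}-G_r^{n-1})g_r$, deduces $g_r(v) = 2\lambda_2(r)/\bigl(n(n-1)z'(G_r(v))\bigr)$ from \eqref{hatFdef}, and changes variables to $q$ to obtain an integral of $-n(n-1)z(q)\,G_r^{-1}(q)$, which the quantile-form moment constraints then evaluate. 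The two routes are different bookkeeping around the same core identity; yours avoids explicitly invoking the second-order-statistic density and the quantile change of variables, replacing both with a single integration by parts, which is arguably the cleaner path given that the paper already records the revenue in the CDF-integral form \eqref{revenue}. Your handling of the atom at $r$ and the boundary terms is correct, and the final sign bookkeeping via $\lambda_1(r)<0$ (from lemma \ref{welldefined}) is the same in both.
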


Next, we show that the full derivative $\frac{d R(G_r,r)}{dr}$ also admits a tractable expression. (If $G_r$ were worst-case distributions, this derivative would be computable by a suitable version of envelope theorem; but they are not.)
\begin{lemma}\label{decreasing} 
For $r\geq \vmin^{**}$,
\[\frac{d R(G_r,r)}{dr}=n(n-2)qz-nq^{n-1}q'(r-c).\]
\end{lemma}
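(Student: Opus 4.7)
The plan is to differentiate the closed-form expression \eqref{revenueexpr} from Lemma \ref{rexpr} and use the three defining equations of the family $\{G_\rho\}$ to collapse the result. The triple $(\lambda_1(r),\lambda_2(r),q(r))$ is pinned down by the continuity of $G_r^{-1}$ at the atom,
\[
n(n-1) z(q) - \lambda_1 \;=\; 2\lambda_2 r, \qquad (\star)
\]
together with the mean and second-moment equalities, which in quantile form read
\[
m \;=\; rq + \int_q^1 \frac{n(n-1)z(y)-\lambda_1}{2\lambda_2}\,dy, \qquad m^2+\sigma^2 \;=\; r^2 q + \int_q^1 \left(\frac{n(n-1)z(y)-\lambda_1}{2\lambda_2}\right)^2 dy.
\]

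First, I would differentiate \eqref{revenueexpr} directly in $r$, grouping the terms by $\lambda_1'$, $\lambda_2'$, $q'$, and pure remainder. Substituting $\lambda_1 = n(n-1)z - 2\lambda_2 r$ from ($\star$) simplifies two of these groups dramatically: the pure remainder collapses to $q\bigl[n(n-2)z + 2\lambda_2 r\bigr]$, and the coefficient of $q'$ becomes $nr\bigl[(n-2)z(q) - z'(q)q\bigr] + c n q^{n-1} = nq^{n-1}(c-r)$, after invoking the polynomial identity $(n-2)z(q) - z'(q)q = -q^{n-1}$, which one checks directly from $z(q) = q^{n-1}-q^{n-2}$.

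Second, I would handle the remaining $\lambda_1'$ and $\lambda_2'$ terms by implicitly differentiating the second-moment equality in $r$. Applying Leibniz to the variable lower limit and chain rule to the integrand, the Leibniz boundary term vanishes thanks to ($\star$) (which says the integrand at $y=q$ equals $r^2$, matching $r^2 q'$ from the atom), and the two surviving integrals collapse back to $m - rq$ and $m^2+\sigma^2 - r^2 q$, producing the linear identity
\[
\lambda_1'(m - qr) + 2\lambda_2'(m^2+\sigma^2 - qr^2) \;=\; 2\lambda_2 r q.
\]
This is exactly the combination of $\lambda_1'$ and $\lambda_2'$ that appears, with the opposite sign, in the direct differentiation of \eqref{revenueexpr}; substituting it contributes $-2\lambda_2 rq$, which cancels the $2\lambda_2 rq$ term inside the pure remainder. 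Adding the three surviving contributions gives
\[
\frac{dR(G_r,r)}{dr} \;=\; n(n-2)qz - nq^{n-1}q'(r-c),
\]
as claimed. Notably, differentiating the mean equality is not required.

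The main obstacle is the sheer bookkeeping: direct differentiation of \eqref{revenueexpr} produces several terms that must be regrouped, and the implicit differentiation of the second-moment constraint involves a Leibniz boundary contribution plus chain rule under an integral. The clean final form depends on two exact cancellations—the polynomial identity for the $q'$ coefficient and the matching of $2\lambda_2 rq$ in the remainder against the differentiated variance constraint—both of which ultimately trace back to the boundary condition ($\star$) defining $G_r$.
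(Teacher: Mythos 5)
Your proof is correct, and it is a genuine variant of the paper's route. The paper first uses the boundary condition \eqref{lambda1cond} to eliminate $\lambda_1$ entirely from the expression for $\tilde{R}:=\mathbb{E}(v_{(2)}1_{\{v_{(2)}>r\}})$ before differentiating; the remaining derivatives $\lambda_2'$ and $q'$ are then handled by differentiating \emph{both} \eqref{eq1} and \eqref{eq2} (the $\lambda_1$-free forms of the mean and variance constraints) and combining. You instead differentiate \eqref{revenueexpr} directly, carrying $\lambda_1$ and $\lambda_1'$ along, apply $(\star)$ only to the undifferentiated $\lambda_1$ occurrences, and then dispose of the whole $\lambda_1'$, $\lambda_2'$ block at once by differentiating a single constraint --- the quantile-form second-moment equality --- whose Leibniz boundary term vanishes precisely because $(\star)$ says the integrand at the lower limit equals $r^2$. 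This buys a small economy: you never need the derivative of the mean constraint, only its undifferentiated form (to recognize $\int_q^1 u\,dy = m-qr$ when collapsing the integrals). Both proofs ultimately rest on the same three ingredients --- the boundary condition $(\star)$, implicit differentiation of the moment constraint(s), and the polynomial identity $qz'(q) = (n-2)z(q) + q^{n-1}$ --- so the structural insight is shared, but the bookkeeping path you take is leaner, and the identity $\lambda_1'(m-qr) + 2\lambda_2'(m^2+\sigma^2-qr^2) = 2\lambda_2 qr$ that organizes your cancellation does not appear as such in the paper.
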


Now we are ready to state the key proposition. Recall that $q(r)$ satisfies $\phi(q(r))=1+\frac{\sigma^2}{(m-r)^2}$. Because $\phi'(q)>0$ (shown in the proof of lemma \ref{welldefined}) and  $d(\sigma^2/(m-r)^2)/dr>0$, by implicit function theorem $q(r)$ is a differentiable function with $q'(r)>0$.

\begin{proposition}\label{fhatworks0}
Suppose $c>\vmin^{**}$. Then, $R(\hat{F}_r,r)< \underline{R}(c)$ for all $r\neq c$ and $R(\hat{F}_r,r)=\underline{R}(c)$ for $r=c$.
\end{proposition}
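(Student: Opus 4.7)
The plan is to follow the template of Case~1 of Proposition~\ref{fhatworks}, partitioning $r\in[0,\infty)$ according to the piecewise definition of $\hat F_r$ and handling each piece in turn. The two workhorse tools, replacing the closed-form algebra available in Proposition~\ref{fhatworks}, are the revenue expression of Lemma~\ref{rexpr} and, crucially, the clean full-derivative formula of Lemma~\ref{decreasing}.

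Three of the five subintervals are quick. At $r=c$, $\hat F_c=G_c$ is the worst-case distribution by Proposition~\ref{wcdistr0}, so $R(\hat F_c,c)=\underline R(c)$. For $r\geq m$, $\hat F_r=\delta_m$ concentrates at $m\leq r$, so under the sale convention of \eqref{revenue} there is never a sale and $R(\delta_m,r)=c$; since under $G_c$ the probability $v_{(2)}>c$ is positive (as $G_c$ has mean $m>c$ and $n\geq 2$), we have $\underline R(c)=R(G_c,c)>c$. For $r\in(c,m)$, Lemma~\ref{decreasing} gives
\[
\frac{dR(G_r,r)}{dr}=n(n-2)q(r)z(q(r))-nq(r)^{n-1}q'(r)(r-c).
\]
Since $q(r)\in(0,1)$ forces $z(q(r))=q^{n-1}-q^{n-2}<0$, the first term is nonpositive (and zero for $n=2$), while the second is strictly negative because $q'(r)>0$ and $r>c$. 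Hence $R(G_r,r)$ is strictly decreasing on $[c,m)$, which suffices.

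The delicate interval is $[\vmin^{**},c)$, where $\hat F_r=G_{r^+}$ is the limit of $G_\rho$-type distributions with the atom pushed infinitesimally above $r$. Because $G_{r^+}(r)=0$, the penalty term $-(r-c)F^n(r)$ in \eqref{revenue} disappears and $R(G_{r^+},r)=R(G_r,r)+(r-c)q(r)^n$. Differentiating, the $(r-c)$-proportional cross-terms from Lemma~\ref{decreasing} and from $\frac{d}{dr}[(r-c)q^n]$ cancel exactly, and using the identity $n(n-2)=(n-1)^2 q^*_n$ (immediate from $q^*_n=1-1/(n-1)^2$) the expression collapses to
\[
\frac{dR(G_{r^+},r)}{dr}=(n-1)^2\,q(r)^{n-1}\bigl(q(r)-q^*_n\bigr).
\]
By Lemma~\ref{welldefined}, $q(r)\in[q^*_n,1)$ throughout $[\vmin^{**},m)$, with $q(\vmin^{**})=q^*_n$ when $\vmin^{**}>0$ and $q'(r)>0$, so $R(G_{r^+},r)$ is strictly increasing on $(\vmin^{**},c)$ and tends to $R(G_c,c)=\underline R(c)$ as $r\to c^-$, yielding strict inequality on $[\vmin^{**},c)$. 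Finally, on $[0,\vmin^{**})$ the threat $\hat F_r=G_{\vmin^{**}}$ has support in $[\vmin^{**},\infty)$, so the reserve $r<\vmin^{**}$ never binds and $R(G_{\vmin^{**}},r)$ is constant in $r$, equal by continuity to $\lim_{r\to(\vmin^{**})^-}R(G_{r^+},r)$, which is strictly below $\underline R(c)$ by the previous step.

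The main obstacle is the $[\vmin^{**},c)$ step: one must formalize $G_{r^+}$ as the limit of a sequence $G_{r+1/k}$ (checking that feasibility and the revenue integral pass to the limit), combine Lemma~\ref{decreasing} with the derivative of the correction $(r-c)q^n$, and verify the algebraic cancellation that yields the elegant form $(n-1)^2 q^{n-1}(q-q^*_n)$. Once this computation is secured, the conclusion follows immediately by reading off the sign from $q(r)\geq q^*_n$, which is exactly what Lemma~\ref{welldefined} provides.
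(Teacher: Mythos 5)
Your proof is correct and follows the same plan as the paper's: invoke Proposition~\ref{wcdistr0} at $r=c$, use Lemma~\ref{decreasing} for strict decrease on $(c,m)$, add back the $(r-c)q^n$ correction on $[\vmin^{**},c)$ to obtain the equivalent form $q^{n-1}\bigl((n-1)^2 q - n(n-2)\bigr)$, and observe constancy below $\vmin^{**}$ and triviality for $r\geq m$. One cosmetic slip: the limit on $[0,\vmin^{**})$ should be $\lim_{r\to(\vmin^{**})^+}R(G_{r^+},r)$ (the right limit from the adjacent piece), not a left limit.
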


\begin{proof}
The fact that $R(\hat{F}_r,r)=\underline{R}(c)$ for $r=c$ is obvious since $\hat{F}_c$ is a worst-case distribution for $r=c$, as identified by proposition \ref{wcdistr0}.

As $z<0$, $q'>0$, by \eqref{finalderivative2} we have $R'<0$ for $r\geq c$ unless $n=2$ and $r=c$ in which case $R'=0$. Thus, $R(\hat{F}_r,r)$ is strictly decreasing on $[c,m)$ and 
$R(\hat{F}_r,r)< \underline{R}(c)$ for $r\in(c,m)$. 

For $r\geq m$, $R(\hat{F}_r,r)=c$, but $c=\lim\limits_{r\to m^-}  R(\hat{F}_r,r)$. Thus, $c<\underline{R}(c)$.

Consider now $r\in[\vmin^{**},c)$. $R(\hat{F}_r,r)=R(G_{r^+},r)$ differs from $R(G_r,r)$ only by term $(r-c)q^n(r)$ since there is a sale if $v_{(1)}=r^+$ but not if $v_{(1)}=r$. Thus, when $r\in[\vmin^{**},c)$, we get 
\[\frac{dR(\hat{F}_r,r)}{dr}=n(n-2)qz-nq^{n-1}q'\cdot(r-c)+[(r-c)q^n]'=q^{n-1}((n-1)^2q-n(n-2)).\]
For $r> \vmin^{**}$, $q(r)> q^*_n$, and for $r=\vmin^{**}$, $q(r)=q^*_n$. Thus, the above derivative is zero at $r=\vmin^{**}$ and positive at $r\in(\vmin^{**},c)$. Thus, $R(\hat{F}_r,r)$ is strictly increasing on $[\vmin^{**},c)$, which implies that 
$R(\hat{F}_r,r)< \underline{R}(c)$ for $r\in[\vmin^{**},c)$.

Finally, because $R(\hat{F}_r,r)<\underline{R}(c)$ for $r=\vmin^{**}$, this is true for $r<\vmin^{**}$ as well.
\eop
\end{proof}

\vspace{1em}
\textbf{Case 2  (Low Variance).} $c\leq \vmin^{**}$. Define threat distributions $\hat{F}_r$ by 
\[\hat{F}_r:=
\begin{cases}
G_{\vmin^{**}}, & r\in [0,\vmin^{**});\\
G_{r}, & r\in [\vmin^{**},m);\\
\delta_m, & r\geq m,
\end{cases}\]

The proof that $R(\hat{F}_r,r)\leq \underline{R}(c)$ is exactly the same as the proof of proposition \ref{fhatworks2}, with $\vmin^{*}$ replaced by $\vmin^{**}$.

Now, because by the above analysis 
$R(\hat{F}_r,r)\leq \underline{R}(c)$ for all $r$ and because distributions $\hat{F}_r$ are feasible to Nature, we conclude that $\underline{R}(r)\leq R(\hat{F}_r,r)\leq \underline{R}(c)$ for all $r$, which finishes the proof of theorem \ref{main2}.

\subsection{Uniqueness}
Analogously to proposition \ref{unique?}, we establish the following:

\begin{proposition}\label{unique?2}
Suppose the seller knows the mean of value distribution $m$ and an upper bound on variance $\sigma^2$. Then:
\begin{enumerate}
\item (High variance case.) If  $\vmin^{**}<c$, $c$ is the unique maxmin reserve price;
\item (Low variance case.) If $\vmin^{**}\geq c$, all prices $r\in[0,c]$ are maxmin reserve prices.
\end{enumerate}
\end{proposition}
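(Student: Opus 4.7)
For Part 1 ($\vmin^{**}<c$) I would simply invoke Proposition \ref{fhatworks0}, which has already established $R(\hat{F}_r,r)<\underline{R}(c)$ for every $r\neq c$. This immediately yields $\underline{R}(r)\le R(\hat{F}_r,r)<\underline{R}(c)$, ruling out any $r\neq c$ as a maxmin price.

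For Part 2 ($c\le\vmin^{**}$) my plan mirrors the strategy used in the proof of Proposition \ref{unique?}. I aim to show that $F^*:=G_{\vmin^{**}}$ (the worst-case distribution at $r=c$ identified in Proposition \ref{wcdistr0}) is also worst-case at every $r\in[0,c]$, and that $R(F^*,r)=R(F^*,c)=\underline{R}(c)$; together these give $\underline{R}(r)=\underline{R}(c)$, which is exactly the maxmin property of every $r\in[0,c]$.

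The technical heart of the argument is the decomposition
\[
R(F,r)=(c-r)F^n(r)+\mathbb{E}_F[v_{(2)}]+\int_0^r F_{(2)}(t)\,dt,
\]
where $F_{(2)}(t):=nF^{n-1}(t)-(n-1)F^n(t)$ is the cdf of the second-order statistic. I would derive this from \eqref{revenue} using the identity $\max(v_{(2)},r)=v_{(2)}+(r-v_{(2)})^+$ together with the layer-cake formula $\mathbb{E}_F[(r-v_{(2)})^+]=\int_0^r F_{(2)}(t)\,dt$. For $r\le c$ all three summands are nonnegative, and I would then argue that $F^*$ minimizes each one separately over $\Delta_2(m,\sigma^2)$: because $F^*$ puts no mass strictly below $\vmin^{**}\ge c\ge r$, we have $F^*(r)=0$ and $F_{(2)}\equiv 0$ on $[0,r)$ under $F^*$, so the first and third summands vanish at $F^*$; and the middle summand $\mathbb{E}_F[v_{(2)}]$ coincides with $R(F,0)$ evaluated when the seller's valuation is taken to be $0$, so Proposition \ref{wcdistr0} applied to this auxiliary problem (for which $\Delta_2(m,\sigma^2)$ is unchanged, since it depends only on $m$ and $\sigma^2$) identifies $G_{\vmin^{**}}=F^*$ as its minimizer. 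A common minimizer of each of several nonnegative summands is a minimizer of their sum, so $F^*$ attains $\underline{R}(r)$, and evaluating yields $R(F^*,r)=\mathbb{E}_{F^*}[v_{(2)}]=R(F^*,c)=\underline{R}(c)$.

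The step most likely to need care is the middle-summand claim: Proposition \ref{wcdistr0} is stated at the fixed seller's valuation $c$, not at $c=0$, but its proof uses $c$ only as a parameter and carries over verbatim to any hypothetical seller's valuation in $[0,m)$ -- in particular to $0$, for which $R(F,0)$ reduces to $\mathbb{E}_F[v_{(2)}]$. Once this identification is in place, the remaining term-by-term comparisons are routine.
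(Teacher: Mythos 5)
Your proof is correct. Part 1 is handled exactly as in the paper, by invoking Proposition \ref{fhatworks0}. For Part 2 you follow the paper's high-level plan -- show that $F^*=G_{\vmin^{**}}$ remains worst-case for every $r\in[0,c]$ and that $R(F^*,r)=R(F^*,c)$ -- but the decomposition you use to carry this out is different from, and in fact more careful than, the one the paper writes. The paper (in the proof of Proposition \ref{unique?}, declared to apply verbatim here) asserts $R(F,r)=r-c+(c-r)F^n(r)+R(F,c)$ and concludes by noting $F^*$ minimizes $(c-r)F^n(r)$ and $R(F,c)$ separately. That identity is not actually correct: from \eqref{revenue} one obtains
\[
R(F,r)-R(F,c)=(c-r)F^n(r)-\int_r^c\bigl(nF^{n-1}(v)-(n-1)F^n(v)\bigr)\,dv,
\]
and once the omitted integral term is restored, the term-by-term argument no longer goes through, because $F^*$ drives $-\int_r^c(\cdots)\,dv$ to its \emph{maximum} value $0$, not its minimum. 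Your decomposition $R(F,r)=(c-r)F^n(r)+\mathbb{E}_F[v_{(2)}]+\int_0^r F_{(2)}(t)\,dt$, which I verified directly from \eqref{revenue}, resolves this cleanly: each of the three summands is nonnegative and is genuinely minimized by $F^*$ -- the first and third to zero since $F^*$ places no mass below $\vmin^{**}\ge c>r$, and the middle one by Proposition \ref{wcdistr0} applied with seller's valuation $0$, a legitimate substitution because $\Delta_2(m,\sigma^2)$ and $\vmin^{**}$ do not depend on $c$. So your version not only proves the result but also repairs a gap in the paper's own argument.
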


The proof of proposition \ref{unique?2} is identical to the proof of proposition \ref{unique?}. Note that the condition $c\leq \vmin^{**}$ can be alternatively viewed as variance is small enough or the number of bidders is high enough. As in section \ref{knownmean}, part 2 of proposition \ref{unique?2} might weakly explain why reserve price substantially lower than $c$ are observed in practice. Unlike the case in section \ref{knownmean}, in the case of known bound on variance we are not aware of the full characterization of the set of maxmin reserve prices.

\subsection{Example: $n=2$}
In case of two bidders, closed-form expressions for threat distributions $\hat{F}_r$ are available. As $z(q)=q-1$ for $n=2$, by \eqref{hatFdef}, $G^{-1}(q)$ is a linear function wherever it is defined and thus distributions $G_r$ are mixtures of an atom at $r$ with a uniform distribution. The same applies to the worst-case distribution $F^*$ when $r=c$. 

For $n=2$, $\vmin^{**}=\max\{m-\sqrt{3}\sigma,0\}$. $m-\sqrt{3}\sigma$ is simply the lowest point in the support of a uniform distribution with mean $m$ and variance $\sigma^2$. 

Suppose $m-\sqrt{3}\sigma\geq 0$, so $\vmin^{**}=m-\sqrt{3}\sigma$. There are two cases: $c\leq m-\sqrt{3}\sigma$ and $c>m-\sqrt{3}\sigma$. In the first case, which corresponds to either low seller's valuation or low variance, the worst-case distribution for $r=c$ is simply uniform on $[m-\sqrt{3}\sigma,m+\sqrt{3}\sigma]$ and the maxmin revenue is 
\beq\label{r2}
\underline{R}^*_2=m-\frac{\sqrt{3}}{3}\sigma.
\eeq  

Nature's threats work as follows. For all $r\in[0,m-\sqrt{3}\sigma]$, Nature can use this uniform distribution and induce the same revenue (and this distribution is still worst-case for $r\in[0,c]$ as shown in the proof of proposition \ref{unique?2}). For $r\in(m-\sqrt{3}\sigma,m)$, Nature may use a distribution $G_r$ that has an atom of 
\[q(r)=\frac{\sigma^2-(m-r)^2/3}{\sigma^2+(m-r)^2}\]
on $r$ and is uniform on $(r,b(r)]$ where 
\[b(r)=
\frac{1}{2}(3m-r)+\frac{3}{2}\frac{\sigma^2}{m-r}.\]
Note that $b(r)$ grows without bound when $r\to m$; the fact that there is no upper bound on values is important. For $r\geq m$, Nature puts all mass on $m$. 

If $c>m-\sqrt{3}\sigma$ (either high seller's valuation or high variance), the worst-case distribution for $r=c$ is $G_c$ itself. For $r\in[0,m-\sqrt{3}\sigma]$, Nature may use uniform distribution on $[m-\sqrt{3}\sigma,m+\sqrt{3}\sigma]$, for $r\in(m-\sqrt{3}\sigma,c)$ it may use $G_{r^+}$, while for $r\in(c,m)$ it again may use $G_r$.

When $n=2$, we can illustrate the fact that threat distributions $\hat{F}_r$ are generally \emph{not} worst-case distributions. For instance, suppose $c=0$, $m=\sigma^2=1$, $n=2$ (Figure \ref{fig:proofidea}, left) and $r=0.5$. Then $\hat{F}_r$ is is such that $v_i$ is distributed uniformly on $[0.5,4.25]$ with prob. $\frac{4}{15}$ and equal to $0.5$ with prob. $\frac{11}{15}$. Expected revenue under $\hat{F}_r$ is 0.32. However, if Nature uses a distribution which is a mixture of $\delta_{0.5}$ and uniform distribution on $[2.5,b]$ (where $b$ and the size of the atom are pinned down by moments constraints), the expected revenue is approximately $0.2767<0.32$. Numerically, all worst-case distributions have similar gaps in support and are intractable analytically even for $n=2$. 

\subsection{A formula for maxmin revenue}
It may be shown that in the low variance case the formula for maxmin revenue has the same simple form as in \eqref{r2}. Indeed, in the low variance case all prices in $[0,c]$ are maxmin, and the maxmin revenue is equal to that under the price $\vmin^{**}$, as if there were sale when $v_{(1)}=r$. Thus, replacing $c$ with $r$, plugging $r=\vmin^{**}$ in \eqref{revenueexpr}, and then getting rid of $\lambda_1$ and $\lambda_2$ using \eqref{lambda1cond}-\eqref{eq2},  one gets that 
\beq
\underline{R}^*_n=m-\gamma_n\sigma,
\eeq
where $\gamma_n$ depends only on $n$ (in the proof of proposition \ref{asymptoptimal} we give a formula for $\gamma_n$ in terms of $\phi(q^*_n)$).  Thus, the worst-case revenue is simply the mean minus a penalty linear in the standard deviation. However, the values of the penalties $\gamma_n$ are rather unexpected. For instance \footnote{Taking $\sigma/m\leq \sqrt{4.7}$ is enough for variance to be ``low'' if $c=0$.}, 
\[\underline{R}^*_3=m-\frac{\sqrt{470}}{80}\sigma\approx m-0.271\sigma\]
\[\underline{R}^*_4=m-\frac{\sqrt{21604695}}{25515}\sigma\approx m-0.182\sigma\]
\[\underline{R}^*_5=m-\frac{\sqrt{8995616791}}{688128}\sigma\approx m-0.138\sigma.\]

Note that the maxmin revenue is strictly decreasing in variance. This is expected for $n=2$ when the second order statistic is the minimal value, whose expectation is naturally below $m$ and so higher variance reduces it. For higher $n$, the expectation of the second order statistic can be well above $m$ and so larger variance may naturally increase rather than decrease it. The resolution to this paradox is that Nature chooses highly skewed distributions such that $\mathbb{E}v_{(2)}$ is below $\mathbb{E}v_i=m$ for any $n$. We expect the solution to be substantially different, with variance constraint not binding,  when Nature is allowed to choose only from symmetric distributions for $n\geq 3$.

\section{Large number of bidders}\label{largenumber}
\subsection{Asymptotically optimal mechanism}
Throughout the paper so far, we have considered only the issue of optimal reserve price but not the issue of optimal mechanism. This more general question seems to be challenging. One reason for that is that when the distribution $F$ is unknown, one cannot assess whether a given direct mechanism is Bayesian-incentive compatible or not. (However, one can still ask what is a worst-case $F$ for a given mechanism, having in mind the effect of $F$ on the potentially non-truthful equilibrium strategies.)
But even if one assumes dominant-strategy incentive-compatibility of a direct mechanism a priori (a notion independent of $F$), Nature's optimization problem, as seen as a problem of choosing a joint distribution of values, is not a convex problem due to the stochastic independence constraint (a mixture of two product distributions is not in general a product distributions). This precludes the use of strong duality -- a simplification trick used by, e.g., \cite{suzdal2020anoptimal}.

However, it is possible to establish that the second-price auction with a maxmin reserve price (e.g. $r^*=c$) is an asymptotically maxmin mechanism among all ex post individually rational mechanisms when the number of bidders is large.

The argument rests only on the fact that by ex post individual rationality, the revenue of a mechanism is not higher than the social surplus, $\max_i v_i$. In fact, we do not even need to assume that the bidders play a Bayesian equilibrium. All we need is that a mechanism and solution concept are such that the bidders always get a nonnegative payoff. 

Formally, let a mechanism be a tuple $M=(S_1,\ldots,S_n, \textbf{x}(s),\textbf{t}(s))$ where, as usual, $S_i$ is the set of strategies of bidder $i$, and the functions $\textbf{x}(s)$, $\textbf{t}(s)$, with the usual codomains, specify a (possibly randomized) allocation and a vector of transfers for each strategy profile $s$. Let $\mathcal{M}$ be the set of all mechanisms. Let measurable \emph{outcome functions} $x(v),t(v)$ map a vector of values $v$ to an allocation and a vector of transfers. Let $O$ be the set of all outcome functions. A \emph{solution concept} is a correspondence $SC: \mathcal{M}\times \Delta\Rightarrow O$. That is, a solution concept maps a mechanism $M\in\mathcal{M}$ and a value distribution $F\in\Delta$ to a set of outcome functions deemed possible. This set can depend on the distribution, as for the Bayesian equilibrium solution concept. There can also be multiple ``equilibrium'' outcome functions. Given a correspondence $SC$, the seller restricts attention only to mechanisms $M$ such that $SC(M,F)\neq\emptyset$ for all $F\in\Delta$. 

We say that a mechanism $M_0$ is \emph{robustly ex post individually rational} under a solution concept $SC$ and set of distributions $\Delta$ iff, for all $F\in\Delta$, all outcome functions $(x(\cdot),t(\cdot))\in SC(M_0,F)$, all vectors of values $v$, and all $i$, $v_ix_i(v)-t_i(v)\geq 0$. 
The qualifier ``robustly'' refers to the fact the inequality holds for the outcome functions, that may depend on $F$, regardless of $F$. 

For example, the first-price auction is robustly ex post IR under Bayesian equilibrium as winning bidders never pay more than their values, and losers pay nothing. The English auction is robustly ex post IR under the weak solution concept used by \cite{haile2003inference} who only assume that bidders never bid more than their values and never allow an opponent to win at a price they are willing to beat (and under stronger concepts as well). The schemes in \cite{segal2003optimal} are robustly ex post IR under undominated strategies. 

Now, let $R_n(M,F,o)$ be the expected revenue with $n$ bidders under mechanism $M$, distribution $F$ and outcome functions $o\in SC(M,F)$. Let $\underline{R}_n(M)$ be the $n$-bidder revenue guarantee of a mechanism with unknown distribution $F$ and ``equilibrium'' outcome $o\in SC(M,F)$, i.e.
\[\underline{R}(M):=\inf\limits_{F\in\Delta}\inf\limits_{o\in SC(M,F)} R(M,F,o).\]
The worst case over ``equilibrium'' outcomes is taken because if the set of ``equilibria'' depends on $F$ and $F$ is unknown, the seller cannot suggest an equilibrium to play. 

Denote by $SPA(r^*)$ the second-price auction with a maxmin reserve price and let $(x_{DS}(v), t_{DS}(v))$ be the usual outcome functions if bidders play dominant strategies (bid their values) in the SPA. 

Then, a simple SPA with a maxmin deterministic reserve price (e.g., $r^*=c$) is  also an asymptotically maxmin mechanism among all robustly ex post individually rational mechanisms provided dominant strategies are played in the SPA.

\begin{proposition}\label{asymptoptimal}
Suppose $\Delta=\Delta(m,\vmax)$ or $\Delta=\Delta(m,\sigma^2)$. Suppose the solution concept $SC_0$ is such that $SC_0(SPA(r^*),F)=\{(x_{DS}(v), t_{DS}(v))\}$ for all $F\in\Delta$. Then, for any mechanism $M_0$ that is robustly ex post individually rational under $SC_0$ and $\Delta$ and for any $\varepsilon>0$, there exists $N$ such that for all $n>N$, $\underline{R}_n(SPA(r^*))>\underline{R}_n(M_0)-\varepsilon$.
\end{proposition}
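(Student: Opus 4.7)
The plan is to sandwich the revenue guarantees: show that every robustly ex post IR mechanism has revenue guarantee at most $m$, and that SPA with a maxmin reserve price achieves a guarantee that converges to $m$.

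\textbf{Step 1 (Upper bound on any mechanism).} I would show $\underline{R}_n(M_0)\leq m$ for every $n$ by evaluating $M_0$ at the degenerate distribution $\delta_m$, which lies in both $\Delta_1(m,\vmax)$ and $\Delta_2(m,\sigma^2)$. Under $\delta_m$, every bidder has value $m$. For any outcome $(x(\cdot),t(\cdot))\in SC_0(M_0,\delta_m)$, robust ex post IR gives $t_i(v)\leq v_i\,x_i(v)=m\,x_i(v)$ pointwise, so the seller's revenue (including the non-sale value $c$) satisfies
$$c\Bigl(1-\sum_i x_i(v)\Bigr)+\sum_i t_i(v)\;\leq\; c\Bigl(1-\sum_i x_i(v)\Bigr)+m\sum_i x_i(v)\;\leq\; m,$$
since $\sum_i x_i(v)\in[0,1]$ and $c<m$. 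Taking expectations and then infima over $F$ and over outcomes yields $\underline{R}_n(M_0)\leq m$.

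\textbf{Step 2 (Lower bound on SPA).} I would show $\underline{R}_n(SPA(r^*))\to m$. For the bounded-support setting, Proposition~\ref{wcmean} gives the worst-case distribution for $r=c$ as binary on $\{\max(\vmin^*,c),\vmax\}$ with mean $m$. For $n$ large enough one has $\vmin^*=m-(\vmax-m)/((n-1)^2-1)>c$, the atom at $\vmin^*$ has probability $q_n^*=1-1/(n-1)^2$, and the worst-case revenue is
$$\underline{R}(c)=\vmin^*+(\vmax-\vmin^*)\bigl(1-q_n^{*n}-nq_n^{*(n-1)}(1-q_n^*)\bigr).$$
Since $q_n^{*n}\to 1$ and $n(1-q_n^*)=n/(n-1)^2\to 0$, the parenthesis tends to $0$, while $\vmin^*\to m$ and $(\vmax-\vmin^*)$ stays bounded, so $\underline{R}(c)\to m$. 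For the variance setting, the formula $\underline{R}_n^*=m-\gamma_n\sigma$ from Section~\ref{knownmeanandvariance} applies whenever we are in the low-variance regime $\vmin^{**}\geq c$. It therefore suffices to show $\phi(q_n^*)\to\infty$ so that $\vmin^{**}=\max\{m-\sigma/\sqrt{\phi(q_n^*)-1},0\}\to m$ (which simultaneously places us in the low-variance regime for all large $n$) and $\gamma_n\to 0$.

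\textbf{Step 3 (Combine).} Given $\varepsilon>0$, pick $N$ so that $\underline{R}_n(SPA(r^*))>m-\varepsilon$ for all $n>N$. Then $\underline{R}_n(SPA(r^*))>m-\varepsilon\geq\underline{R}_n(M_0)-\varepsilon$ for all such $n$.

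\textbf{Main obstacle.} The bounded-support convergence is transparent because everything reduces to elementary powers of $q_n^*$. The harder part is the variance case: establishing $\phi(q_n^*)\to\infty$ despite $\phi$ being defined implicitly via integrals of $z(y)-z(q)$. A natural tactic is to change variables by $y=1-s/(n-1)$ on $[q_n^*,1]$, expand $z(y)-z(q_n^*)$ to leading order in $s$, and show that the resulting scaled numerator of $\phi$ dominates the scaled denominator. Alternatively, one could bypass the exact maxmin formula and directly exhibit a family of feasible distributions (e.g., mixtures of $\delta_m$ with a vanishing mass on $\{0,\vmax'\}$ scaled to meet the variance constraint) under which $\mathbb{E}[v_{(2)}]\to m$, thereby sandwiching the SPA guarantee between Nature's best feasible choice and $m$.
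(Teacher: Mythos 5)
Your overall strategy is exactly the paper's: bound $\underline{R}_n(M_0)\leq m$ by playing $\delta_m$ (Step 1), and show $\underline{R}_n(SPA(r^*))\to m$ (Step 2). Step 1 and the bounded-support half of Step 2 are correct and match the paper's argument (the paper writes $\underline{R}^*_n=m-\alpha_n(\vmax-m)$ with $\alpha_n=\frac{n}{n-1}(1-\frac{1}{(n-1)^2})^{n-2}-1$, which is algebraically the same as your expansion of the binary worst case).

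For the variance case, the two arguments part ways. The paper computes $\gamma_n$ explicitly as $\gamma_n=\alpha_n\sqrt{(n-1)^2\psi(q^*_n)-1}$ with $\psi(q)=\phi(q)(1-q)$, invokes boundedness of $\psi$, and then shows $n\alpha_n\to 0$ via $n^2\alpha_n\to\frac12$. Your tactic~1 is actually a cleaner route, but you make it look harder than it is. The needed fact $\phi(q^*_n)\to\infty$ does not require any asymptotic expansion of $z$: the proof of Lemma~\ref{welldefined} already shows $\phi(q)=\psi(q)/(1-q)$ with $\psi(q)>1$ (Jensen), so $\phi(q^*_n)>(n-1)^2\to\infty$ immediately. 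From this $\vmin^{**}\to m$, and since for large $n$ one is in the low-variance regime with worst-case support $[\vmin^{**},\infty)$ and sale always occurring, $\underline{R}^*_n\geq\vmin^{**}$; combined with $\underline{R}^*_n\leq m$ this gives $\underline{R}^*_n\to m$ with no need to track $\gamma_n$ or to establish boundedness of $\psi$ from above. You should make the intermediate inequality $\underline{R}^*_n\geq\vmin^{**}$ explicit rather than jumping from ``$\phi\to\infty$'' to ``$\gamma_n\to 0$.''

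Your tactic~2, however, is directionally wrong: exhibiting a particular feasible $F$ under which $\mathbb{E}[v_{(2)}]\to m$ gives only an \emph{upper} bound on $\inf_F R(F,r^*)$, which you already have from Step~1. To establish the \emph{lower} bound on the SPA guarantee you must control revenue under the worst feasible $F$, not under a favorably chosen one. Drop tactic~2 and finish tactic~1 via the $\psi>1$ shortcut.
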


\begin{proof}
The revenue of any robustly ex post individually rational mechanism $M$ is not more than the social surplus, $\sum_i t_i\leq \max_i v_i$; thus, the worst-case expected revenue is not more than the worst-case expected surplus. However, the latter is not more than $m$ because Nature can always choose $F=\delta_m$. (This argument is similar to the one used in \cite{koccyiugit2020distributionally} and \cite{he2020correlation}.) Thus, $\underline{R}_n(M)\leq m$.

It remains to show than $\underline{R}_n(SPA(r^*))$ converges to $m$ as $n\to\infty$. For $\Delta=\Delta(m,\vmax)$, we have for all sufficiently high $n$
\[\underline{R}^*_n=m-\alpha_n(\vmax-m),\]
where 
\[\alpha_n=\frac{n}{n-1}\left(1-\frac{1}{(n-1)^2}\right)^{n-2}-1.\]
The fact that $\alpha_n$ converges to zero stems from the fact that 
$\left(1-\frac{1}{(n-1)^2}\right)^{n-2}\sim \exp(-1/n)$ as $n\to\infty$. 

For $\Delta=\Delta(m,\sigma^2)$, we have for all sufficiently high $n$
\[\underline{R}^*_n=m-\gamma_n\sigma,\]
where it follows from \eqref{revenueexpr} and \eqref{lambda1cond} (in the Appendix) that 
\beq\label{gamma}
\gamma_n=\alpha_n\sqrt{(n-1)^2\psi(q^*_n)-1},
\eeq
where $\psi(q)=\phi(q)(1-q)$ (see the proof of lemma \ref{welldefined}).
As $\psi$ is bounded, to prove that $\gamma_n\to 0$ one has to prove that $n\alpha_n\to 0$ as $n\to \infty$. In fact, one may show that
\beq\label{alpharate}
\lim\limits_{n\to\infty}n^2\alpha_n=\frac{1}{2}.
\eeq
Indeed, 
\[n^2\alpha_n\sim n^2\frac{n(\exp(-1/n)-1)+1}{n-1}.\]
Then the result follows from a second-order expansion of $\exp(-1/n)$.
\eop
\end{proof}

Proposition \ref{asymptoptimal} implies that neither a randomized reserve price nor a ``bootstrap'' auction in which each bidder faces individual reserve computed based on an estimate of $F$ inferred from other bidders' reported values (a family of mechanisms investigated by \cite{segal2003optimal}) is significantly better than a simple second-price auction when the seller is concerned about the worst-case performance of a mechanism, possesses only minimal statistical information and the number of bidders is sufficiently large. As discussed in the introduction, Segal's schemes approximate the full-distributional-information revenue well, but this is a criterion different from the worst-case performance.

Note that if the seller knows more than the first two moments, say, three moments of the distribution, then the bound $\inf_F \mathbb{E}_{F\sim\cdots\sim F}\max_i v_i\leq m$ may not hold, i.e. the worst-case expected social surplus may be strictly higher than $m$. In other words, this proof technique fails. It is an open question of whether an SPA with a deterministic maxmin reserve price is an asymptotically maxmin mechanism when more moments are known. 

Another limitation of the above result is that it effectively rules out directly eliciting information about $F$ from the bidders (if they have it). More precisely, the setting above does allow asking about $F$ (strategy sets $S_i$ can be any) but the set $\Delta$ of distributions feasible to Nature does not depend on bidders' reports. An interesting direction of further research is to consider a model with a ``rich'' type space (i.e., types encoding both payoff and belief information) and a partially known payoff type distribution such that Nature can choose only a payoff type distribution satisfying some prior constraints and also being ``close'' to some types' elicitable beliefs. See \cite{luz2013surplus} and \cite{chen2018revisiting} for revenue maximization with rich type spaces.

\subsection{Comparing rates of convergence}
How does the seller's maxmin revenue compare in the two settings we have considered in this paper? The revenue obviously depends on the particular values of the upper bounds. However, noting that in both cases revenue converges to the mean $m$ as the number of bidders tends to infinity we can still get a meaningful comparison by comparing the rates of convergence. This will compare the ``strength'' of Nature in two cases. We also bring into the picture the case of correlated private values, studied by \cite{koccyiugit2020distributionally} and \cite{suzdal2020anoptimal} (maxmin reserve price is again $c$ in this case for all $n$ sufficiently large).

According to results in \cite{koccyiugit2020distributionally} and \cite{suzdal2020anoptimal}, when mean $m$ and upper bound on values $\vmax$ are known, and values can be arbitrarily correlated,
\beq\label{revcorr}
\underline{R}_n(r^*)=m-\frac{\vmax-m}{n-1}
\eeq
for all $n$ sufficiently large.

Thus, from \eqref{gamma}, \eqref{alpharate} and \eqref{revcorr} we make the following 

\textbf{Observation.}
\begin{itemize}
\item When mean $m$ and upper bound on values $\vmax$ are known, and values are iid,
\[m-\underline{R}^*_n=\Theta\left(\frac{1}{n^2}\right)\] as $n\to\infty$;
\item When mean $m$ and upper bound on variance $\sigma^2$ are known and values are iid,
\[m-\underline{R}^*_n=\Theta\left(\frac{1}{n}\right)\] as $n\to\infty$;
\item When mean $m$ and upper bound on values $\vmax$ are known, and values can be arbitrarily correlated,
\[m-\underline{R}^*_n=\Theta\left(\frac{1}{n}\right)\] as $n\to\infty$.
\end{itemize}

Thus, starting with the first setting in which mean $m$ and upper bound on values $\vmax$ are known and values are iid, replacing an upper bound on values with an upper bound on variance has a similar (adverse) effect on revenue as allowing arbitrary correlation in values.  
This suggests that an upper bound on values is a much more stringent constraint on Nature than an upper bound on variance. This is perhaps not surprising in hindsight: a bound on values implies a bound on variance, but not vice versa.

\section{Conclusion}\label{disc}
In this paper, we showed that a seller who (1) faces multiple bidders and conducts a second-price auction with a reserve price; (2) possesses only basic information about value distribution; (3) employs a worst-case perspective can do no better than to set the reserve price to her own valuation. This result adds to an emerging theme in the current robustness literature: an auctioneer's worst-case perspective is associated with low reserve prices.  The result may also help explain empirical observations of low reserve prices.

One may think of several extensions of the present results.
\begin{itemize}
\item \textbf{First-price auctions.} We presented an analysis for a second-price auction. A problem with extending results for first-price auctions by revenue-equivalence is that to ensure that a classic pure-strategy Bayesian equilibrium exists, one must constrain Nature to use atomless distributions only, but the worst-case and threat distributions we used throughout the paper do contain atoms. However, the analysis still goes through as these special distributions can be approximated by a sequence of continuous distributions such that the value of revenue is the same in the limit. For example, it can be done by replacing each atom with uniform distributions on an interval below it and then tweaking the overall distribution a little in such a way that Nature's  constraints hold. Thus, in a first-price auction the seller can still do no better than to set a reserve equal to her own valuation.

\item \textbf{Exactly known variance.} We stated the results in section \ref{knownmeanandvariance} for the case of a known upper bound on variance. However, if the variance is known exactly, the results still hold. First, even though if variance is known exactly the set of feasible distributions is not compact (as noted by \cite{carrasco2018optimal}), the infimum in Nature's problem is well-defined. Second, for $r<m$ all worst-case and threat distributions we consider are such that the variance constraint binds. Third, for $r\geq m$ instead of $m$ one can consider a sequence of distributions $G_{m-1/k}$ defined in section \ref{knownmeanandvariance}. Along this sequence, the probability of no sale converges to one and moments constraints are satisfied with equality. As for the asymptotic result (proposition \ref{asymptoptimal}), the key inequality necessary for the proof, $\inf_F \mathbb{E}_{F\sim\cdots\sim F}\max_i v_i\leq m$, still holds, as under the sequence of binary distributions $F_k$ with the lowest point in support $m-1/k$ and a required mean $m$ and variance $\sigma^2$, $\mathbb{E}_{F_k\sim\cdots\sim F_k}\max_i v_i$ converges to $m$.

\item \textbf{Higher moments.} When the seller knows $F$ exactly (i.e. an infinite number of moments of $F$), any optimal reserve price is strictly above $c$. We showed that when the seller knows up to two moments, an optimal price is $c$. It is interesting to ask what is the minimal number $\underline{K}$ such that set of optimal prices is bounded away from $c$ if the seller knows no fewer than $\underline{K}$ moments. Is it true that $\underline{K}=3$? Does the asymptotic optimality result in section \ref{largenumber} carry over to the case where the seller has more information?

\item \textbf{Eliciting information about $F$ from the bidders.} One may extend the model to ``rich'' type spaces and constrain Nature to choose a distribution ``close'' to some types' beliefs, as discussed in section \ref{largenumber}.

\end{itemize}

\bibliographystyle{te}

\bibliography{bibliography2}
 
\section*{Appendix}\label{app}
\pfof{lemma \ref{welldefined}}{
Fortunately, one can write constraints \eqref{mean2},\eqref{var} as closed-form functions of $\lambda_1,\lambda_2$ and $q$, even though there is no closed-form solution for $G_{\rho}$. This is possible since the respective integrals may be rewritten as integrals of the quantile function $G^{-1}(q)$. Indeed, one gets the following system of three equations with three unknowns $(\lambda_1,\lambda_2,q)$:

\[q(\rho)\rho+\int_{q(\rho)}^1\frac{n(n-1)z(q)-\lambda_1(\rho)}{2\lambda_2(\rho)}dq=m\]
\[q(r)\rho^2+\int_{q(\rho)}^1\left[\frac{n(n-1)z(q)-\lambda_1(\rho)}{2\lambda_2(\rho)}\right]^2dq=m^2+\sigma^2.\]
\beq\label{lambda1cond}
\lambda_1(\rho)+2\lambda_2(\rho)\rho=n(n-1)z(q(\rho)).%
\eeq
(The first two equations stem from the mean and variance constraints \eqref{mean2},\eqref{var} while the third is \eqref{hatFdef}, written for $q=q(\rho)$.) Getting rid of $\lambda_1(\rho)$, one simplifies the first two equations to 
\beq\label{eq1}
n(n-1)\int_{q(\rho)}^1(z(q)-z(q(\rho)))dq=2\lambda_2(\rho)(m-\rho)
\eeq
\beq\label{eq2}
(n(n-1))^2\int_{q(\rho)}^1(z(q)-z(q(\rho)))^2dq=(2\lambda_2(\rho))^2((m-\rho)^2+\sigma^2).
\eeq
Equations \eqref{eq1}-\eqref{eq2} may be collapsed to 
\beq\label{eqq}
\phi(q(\rho))=1+\frac{\sigma^2}{(m-\rho)^2},
\eeq
where $\phi(q)$ is as defined in \eqref{phidef}. We now show that for every $\rho\in[\vmin^{**},m)$ equation \eqref{eqq} admits a unique solution $q(\rho)\in[q^*_n,1)$. To this end, we prove that $\phi'(q)>0$ for $q\in[q^*_n,1)$ and that $\lim\limits_{q\to 1}\phi(q)=+\infty$. 

First, by direct computation, $\phi'(q)$ is proportional to
\[z'(q)\cdot\left(\frac{\int_q^1(z(y)-z(q))^2dy}{1-q}-\left(
\frac{\int_q^1(z(y)-z(q))dy}{1-q}\right)^2\right).\]
$z'(q)>0$ for $q\geq q^*_n$ and the expression in parentheses is positive because it equals the variance of random variable $z(Y)-z(q)$ where random variable $Y$ is distributed uniformly on $[q,1]$. Thus, $\phi'(q)>0$. 

Now, $\phi(q)=\psi(q)/(1-q)$ where $\psi(q)$ is the ratio of mean of square to the square of mean of random variable $z(Y)-z(q)$. Thus, $\psi(q)>1$ for all $q<1$, so $\lim\limits_{q\to 1}\phi(q)=+\infty$. 

It follows from the definition of $\vmin^{**}$ that for every $\rho\in[v^{**},m)$, $\phi(q^*_n)\leq 1+\frac{\sigma^2}{(m-\rho)^2}$; we also have  $\lim\limits_{q\to 1}\phi(q)=+\infty>1+\frac{\sigma^2}{(m-\rho)^2}$. Because $\phi(q)$ is continuous and strictly increasing, there exists a unique $q(\rho)\in [q^*_n,1)$ solving \eqref{eqq}. 

The uniqueness and signs of $\lambda_1(\rho)$, $\lambda_2(\rho)$ follow from \eqref{eq1} and \eqref{lambda1cond}.\eop}

\pfof{lemma \ref{rexpr}}{
Note that $G_r$ has an atom at $r$ and a density $g_r$ for $v>r$. Because there is no sale when $v_{(1)}=r$, 
\begin{multline}
R(G_r,r)=\mathbb{E}(v_{(2)}1_{\{v_{(2)}>r\}})+r\cdot n(1-q(r))q^{n-1}(r)+cq^n(r)=\\
=\int_r^{\infty}
v\cdot n(n-1)(G_r^{n-2}(v)-G^{n-1}_r(v))g_r(v)dv-nz(q(r))rq(r)+cq^n(r).
\end{multline}
From \eqref{hatFdef}, one deduces that $g_r(v)=\frac{2\lambda_2(r)}{n(n-1)z'(G_r(v))}$. Thus, 
\[\mathbb{E}(v_{(2)}1_{\{v_{(2)}>r\}})=
\int_r^{\infty}2\lambda_2(r)v\frac{-z(G_r(v))}{z'(G_r(v))}dv.\]
Plugging $v=G^{-1}_r(q)$ and using \eqref{hatFdef} again, one gets
\[\mathbb{E}(v_{(2)}1_{\{v_{(2)}>r\}})=
-\int_{q(r)}^1n(n-1)z(q)\frac{n(n-1)z(q)-\lambda_1(r)}{2\lambda_2(r)}dq.\]
Mean and variance constraints read as 
\[q(r)r+\int_{q(r)}^1\frac{n(n-1)z(q)-\lambda_1(r)}{2\lambda_2(r)}dq=m\]
\[q(r)r^2+\int_{q(r)}^1\left[\frac{n(n-1)z(q)-\lambda_1(r)}{2\lambda_2(r)}\right]^2dq=m^2+\sigma^2.\]

Then, the fact that $\mathbb{E}(v_{(2)}1_{\{v_{(2)}>r\}})=-\lambda_1(r)(m-q(r)r)-2\lambda_2(r)(m^2+\sigma^2-q(r)r^2)$ stems directly from the last three equations. The formula for revenue then follows.
\eop
}

\pfof{lemma \ref{decreasing}}{
Now consider the derivative of  $\tilde{R}(r):=\mathbb{E}(v_{(2)}1_{\{v_{(2)}>r\}})=-\lambda_1(r)(m-q(r)r)-2\lambda_2(r)(m^2+\sigma^2-q(r)r^2)$. We omit arguments of functions for brevity. From \eqref{lambda1cond}, we get that
\[\tilde{R}=2\lambda_2r(r-m)-2\lambda_2((m-r)^2+\sigma^2)-n(n-1)zm+n(n-1)qrz.\]
\[\tilde{R}'=[2\lambda_2(r-m)]'r+2\lambda_2(m-r)-2\lambda_2'((m-r)^2+\sigma^2)-n(n-1)z'q'm+n(n-1)(qrz)'.\]
Using both \eqref{eq1} and the differentiated version of \eqref{eq1}, one gets
\[\tilde{R}'=n(n-1)((qr)'z-(m-r)z'q')+n(n-1)\int_q^1(z(x)-z(q))dx-2\lambda_2'((m-r)^2+\sigma^2).\]
However, differentiating \eqref{eq2} one gets
\[\lambda_2'((m-r)^2+\sigma^2)=-n^2(n-1)^2\frac{q'z'\int_q^1(z(x)-z(q))dx}{4\lambda_2}+\lambda_2(m-r),\]
which, using \eqref{eq1} again, is equivalent to 
\[2\lambda_2'((m-r)^2+\sigma^2)=n(n-1)\left[-q'z'(m-r)+\int_q^1(z(x)-z(q))dx\right].\]
Thus, 
\[\tilde{R}'=n(n-1)(qr)'z.\]
Because $R=\tilde{R}-nzqr+cq^n$, we  get
\[R'=n(n-1)(qr)'z-nqrz'q'-n(qr)'z=n(n-2)(qr)'z-nqrz'q'+nq^{n-1}q'c.\]
Since $qz'=(n-2)z+q^{n-1}$, after simplifications we finally get
\beq\label{finalderivative2}
R'=n(n-2)qz-nq^{n-1}q'(r-c).
\eeq

\eop}

\end{document}